\definecolor{boxcolor}{HTML}{B9DCFF}
\newtheorem{lem}{Lemma}[section]
\newtheorem{conj}{Conjecture}[section]
\newtheorem{prop}{Proposition}[section]
\begin{document}

\title[Extending SIRS epidemics to allow for gradual waning of immunity]{Extending SIRS epidemics to allow for gradual waning of immunity}

\author*[1]{\fnm{Mohamed} \sur{El Khalifi}}\email{mohamed.elkhalifi@math.su.se}

\author[1]{\fnm{Tom} \sur{Britton}}\email{tom.britton@math.su.se}

\affil[1]{\orgdiv{Department of Mathematics}, \orgname{Stockholm University}, \orgaddress{\city{Stockholm}, \country{Sweden}}}

\abstract{SIRS epidemic models assume that individual immunity (from infection and vaccination) wanes in one big leap, from complete immunity to complete susceptibility. For many diseases immunity on the contrary wanes gradually, something that's become even more evident during COVID-19 pandemic where also recently infected have a reinfection risk, and where booster vaccines are given to increase immunity. This paper considers an epidemic model allowing for such gradual waning of immunity (either linear or exponential waning) thereby extending SIRS epidemics, and also incorporates vaccination. The two versions for gradual waning of immunity are compared with the classic SIRS epidemic, where the three models are calibrated by having the same \emph{average cumulative immunity}. All models are shown to have identical basic reproduction number $R_0$. However, if no prevention is put in place, the exponential waning model has highest prevalence and the classic SIRS model has lowest. Similarly, the amount of vaccine supply needed to reach and maintain herd immunity is highest for the model with exponential decay of immunity and lowest for the classic SIRS model. consequently, if truth lies close to exponential (or linear) decay of immunity, expressions based on the SIRS epidemic will underestimate the endemic level and the critical vaccine supply will not be sufficient to reach and maintain herd immunity. For parameter choices fitting to COVID-19, the critical amount of vaccine supply is about 50\% higher if immunity wanes linearly, and more than 150\% higher when immunity wanes exponentially, as compared to the classic SIRS epidemic model.}

\keywords{SIRS epidemic, immunity waning, vaccination, herd immunity}

%%\pacs[JEL Classification]{D8, H51}

%%\pacs[MSC Classification]{35A01, 65L10, 65L12, 65L20, 65L70}

\maketitle

\section{Introduction}
\label{sec:intro}

When considering infectious disease outbreaks over a longer time horizon, waning of immunity, from disease exposure or vaccination, is known to play an important role. This has been considered in epidemic models for many years, and the most well-studied model is the SIRS (susceptible-infectious-recovered-susceptible) epidemic model, where all individuals are classified as being either susceptible, infectious or recovered (implicitly assuming also being immune), and where individuals eventually loose their immunity and go back to being susceptible after some time, e.g. \citep{hethcote1976qualitative}. The simplest form of this epidemic model, defined by differential equations, assumes that recovered individuals go back to being susceptible at constant rate, thus implying that immunity at the \emph{community level} wanes continuously. However, the SIRS model does \emph{not} allow for partially immune individuals or that immunity wanes gradually at the \emph{individual level}: each individual is either completely immune or fully susceptible.

During Covid-19, but also prior to this, it has become evident that \textit{individual immunity} (to infection) is not a binary property, but rather that individual immunity wanes gradually over time  and can later be boosted either by vaccination or natural infection (see e.g. \citep{goldberg2021waning} for empirical evidence). Quite surprisingly this gradual waning of individual immunity has hence not been yet considered in epidemic models. As a consequence, SIRS epidemic models can never have a group of individuals having lost about half of their immunity, but the models do allow 50\% of the community being completely immune and 50\% being completely susceptible. However, these situations are quite different, in particular when additional individuals get infected.

In the current paper we define and analyse an epidemic model which allows for gradual waning of immunity. This is done by assuming that individuals sequentially loose a portion of their immunity in each step, up to a total of $k$ steps when all immunity is lost. For large $k$ this approximates the situation where immunity drops continuously in time, and we consider both the situation where immunity drops linearly and when immunity drops exponentially (the latter seemingly more biologically reasonable). We call our model the SIR\textsuperscript{(k)}S epidemic model since there now are $k$ immunity (recovered) levels, $k=1$ being the classic SIRS model. It is worth pointing out that the current paper considers immunity to infection,  and not immunity to severe disease and how this wanes. The latter is also an important area which has received attention in several other papers (cf. \cite{hethcote1997age,hethcote1999simulations,carlsson2020modeling}).

The three models, the classical SIRS model with a sudden complete drop of immunity, linear decay of immunity and exponential immunity decay, are calibrated by assuming the same cumulative amount of immunity. So for instance, the SIRS model with, on average, 1 year complete immunity, and then returning to complete susceptibility, is compared with the linear immunity decay model taking two years from complete immunity to ful susceptibility.
For each model we derive expressions for the basic reproduction number $R_0$ and the steady state prevalence (endemic level) if no preventive measures are put in place. We also derive the critical amount of vaccine supply needed to reach and maintain herd immunity, for each of the three models.

Our main conclusion shows that the situation is worse for the more realistic models allowing for gradual waning of immunity compared to the classic SIRS model: even though the three models share the same $R_0$ the models with gradual waning will result in higher prevalence (endemic level) if no preventive measures are put in place, and more vaccine supply (or other preventive measures) are needed to reach a steady herd immunity, implying that vaccination policies (or other preventive measures) based on the SIRS epidemic model may lead to an incorrect sense of security. Among the two studied models for immunity waning, linear and exponential decay, the more realistic exponential decay shows the biggest difference (of endemic prevalence and critical amount of vaccine supply) compared to the classic SIRS model.

\section{Model and main results}
\subsection{Formulation of the models}\label{Sec:ModelFormulation}
All three models assume that a) immunity from vaccination as well as disease exposure initially confer complete immunity, and b) that immunity from vaccination wanes in the same way is immunity from disease exposure. Further, infectious individuals have infectious contacts at rate $\beta$ and recover (and become fully immune) at rate $\gamma$. infect fully susceptible individuals. The differences between the models lie in how immunity wanes, and what is the rate of getting infected for a partially immune in relation to a fully susceptible.

Figure \ref{fig:decay modes} illustrates the immunity waning for the classic SIRS epidemic (assuming waning happens at its expected value) and for the models with linear and exponential decay of waning.

\begin{figure}[h!]
    \includegraphics[width=.7\linewidth]{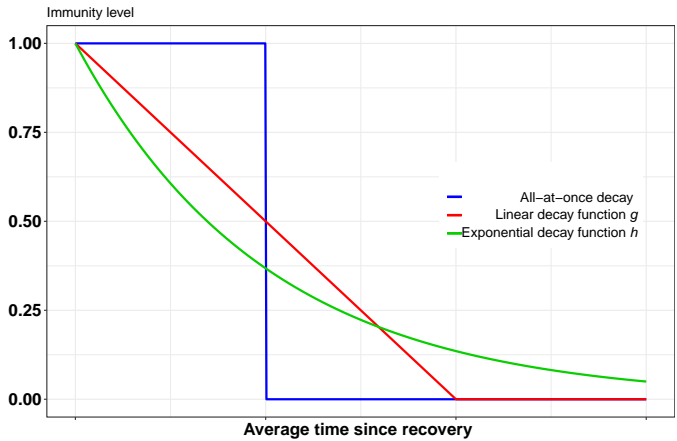} 
    \centering
  \caption{Different modes of decay of immunity on individual level: (blue) all-at-once decay (taking place at its expected time-point), (red) linear decay function $g(u) = 1-\omega/2 u$ if $0\leq u\leq 2/\omega$, and $g(u)=0$ otherwise, and (green) exponential decay function $h(u) = \exp \left(-\omega u\right), u\geq 0$. All three models having the same average cumulative immunity.}
  \label{fig:decay modes}
\end{figure}

\textbf{The classic SIRS epidemic}. The classic SIRS epidemic model assumes that immunity drops from complete immunity to complete susceptibility in one single step at a constant rate $\omega$ (so the mean duration of full immunity equals $\omega^{-1}$) \citep{hethcote1976qualitative}. The model is illustrated in Fig. \ref{fig:SIRS}, where $ s(t) $, $ i(t) $ and $ r(t) $ denote community fractions of susceptible, infectious, and recovered (=immune) individuals at time $ t $, respectively. The model is defined by the following three differential equations:
\begin{equation}\label{SIRS}
	\begin{array}{lll}
		s'(t) &=& \mu - \beta s(t) i(t) + \omega r(t) - \mu s(t),\\
		i'(t) &=& \beta s(t) i(t) - (\gamma+\mu) i(t),\\
		r'(t) &=& \gamma i(t) - (\omega + \mu) r(t).
	\end{array}
\end{equation}

\textbf{The classic SIR\textsuperscript{(k)}S epidemic with linear/exponential waning}. Our new model, denoted the SIR\textsuperscript{(k)}S epidemic model, instead assumes that immunity wanes sequentially in $k$ steps (for some large $k$), as illustrated in Fig. \ref{fig:SIRkS} ($k=1$ gives the classic SIRS epidemic). The linear version does so by choosing the $k$ down-jumps and their corresponding rates such that the decay mimics a linear decay, and the exponential version chooses down-jumps and rates to mimic exponential decay, and both models do this in a way such that the cumulative immunity equals $\omega^{-1}$ (independent of $k$) just like the SIRS model. The new model is illustrated in Figure \ref{fig:SIRkS} and defined in detail with $k+2$ differential equations in Section \emph{Materials and Methods}. There $r_0(t)$ denotes the community fraction being having no susceptibility, $r_1(t)$ the community fraction having gained one level of susceptibility, and so on, and $r_{k-1}(t)$ the fraction having susceptibility level $k-1$ being the last step before becoming completely susceptible.

In \citep{machlaurin2020cost}, similar waning functions were used to model the vaccination efficacy over time while they estimate the cost-effective vaccination strategy against tuberculosis. 

\begin{figure}[htbp]
\centering
\begin{subfigure}[b]{.5\linewidth}
\centering
\resizebox{\columnwidth}{!}{%
   \begin{tikzpicture}[node distance=1cm, auto,
    >=Latex, 
    every node/.append style={align=center},
    int/.style={draw, minimum size=1cm}]

   \node [int] (S)             {$s(t)$};
   \node [int, right=of S] (I) {$i(t)$};
   \node [color=green,very thick,int, right=of I] (R) {$r(t)$};

   \coordinate[right=of I] (out);
   \path[->] (S) edge[thick] node {$\beta i(t)$} (I)
             (I) edge node {$\gamma$} (R)
               (R.north)  edge[out=120, in=60] node {$\omega$} (S.north);
\end{tikzpicture}
}
   \caption{}
   \label{fig:SIRS} 
\end{subfigure}
\begin{subfigure}[b]{.7\linewidth}
\resizebox{\columnwidth}{!}{%
   \begin{tikzpicture}[node distance=1cm, auto,
    >=Latex, 
    every node/.append style={align=center},
    int/.style={draw, minimum size=1cm}]
\centering
   \node [int] (S)             {$s(t)$};
   \node [int, right=of S] (I) {$i(t)$};
   \node [color=green,very thick,int, right=of I] (R0) {$r_0(t)$};
   \node [color=green,very thick,int, right=of R0] (R1) {$r_1(t)$};
   \node [color=green,very thick,int, right= 2cm of R1] (Rk1) {$r_{k-1}(t)$};
   
   \draw [dashed,->] (R1) -- node [above] {}  (Rk1);
   
   \coordinate[right=of I] (out);
   \path[->] (S) edge[thick] node {$\beta i(t)$} (I)
             (I) edge node {$\gamma$} (R0)
             (R0) edge node {$c_k(1)$} (R1)
             %(R1) edge node {$\gamma$} (I) 
               (Rk1.south)  edge[out=-120, in=-130, thick] node[below] {$\beta \frac{k-1}{k} i(t)$} (I.west)
               (Rk1.north)  edge[out=120, in=60] node {$c_k(k)$} (S.north)
               (R1.south) edge[out=-120, in=-130, thick] node[above] {$\beta \frac{1}{k} i(t)$} (I.west);
\end{tikzpicture}
}
\caption{}
\label{fig:SIRkS}
\end{subfigure}
\caption{(a) Diagram of the standard SIRS epidemic model. (b) Diagram of the SIR\textsuperscript{(k)}S epidemic model. The green boxes represent the different classes of partially immune states.}
\end{figure}
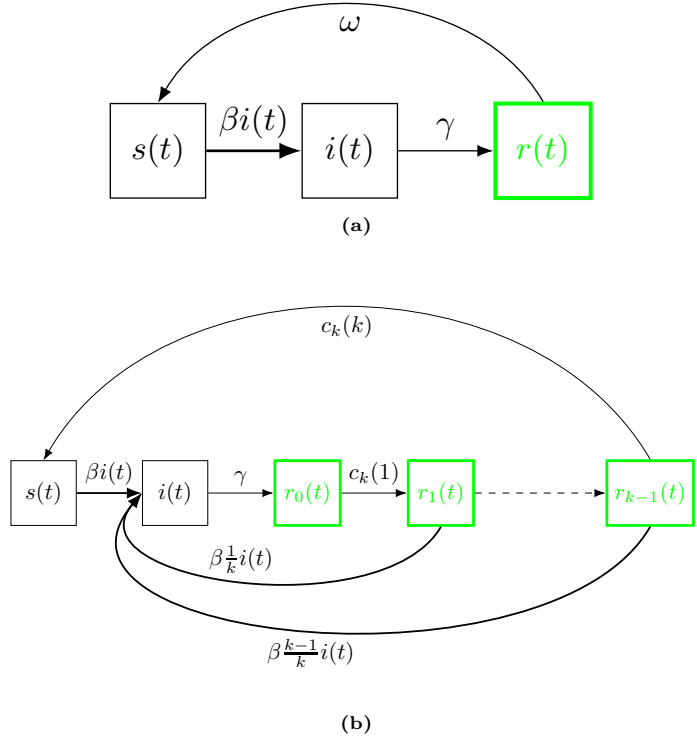

We let SIR\textsuperscript{($\infty$)}S denote the model being the limit of the  SIR\textsuperscript{(k)}S model as $k$ goes to $\infty$ (in our illustrations we use $k=1000$). This limiting model converges to an ODE-PDE system with three equations \citep{kermack1932contributions}, see (Section \ref{Sec.PDE}).

\textbf{Introducing vaccination.} As mentioned earlier, we assume that vaccine as well as infection initially give full immunity, and that the two immunities wane in the same way. 

In the classic SIRS model each individual is either fully susceptible or completely immune at any given point in time, and if this immune status is known it of course only makes sense to vaccinate among the fully susceptible individuals at some rate $\eta$ (why waste vaccines on fully immune?). 

In the case where immunity wanes continuously,  vaccines can in principle be distributed in many different ways (Fig. \ref{fig:generalvacc}). However, since individuals only differ in terms of susceptibility in our model, it should be clear that the class of rational vaccination strategies consist of vaccinating individuals as soon as their immunity drops below some fixed level $\iota$ (or equivalently when the susceptible reaches the level $1-\iota$). The level $\iota$ will determine how much vaccine that will be required: the larger $\iota$  the bigger vaccine supply $\theta$ is needed. For finite $k$ this amounts to vaccinate fully susceptibles at rate $\eta_s^\star$, to not vaccinate in states $r_0$ up to $r_{j-1}$ for some $j\in\{1,\cdots,k-1\}$, to vaccinate $r_j$ at some rate $\eta_j^\star$, and to immediately vaccinate individuals who go from state $r_j$ to $r_{j+1}$ (so the fractions in those states will equal 0). Since individuals in state $r_j$ who loose more immunity are immediately vaccinated, the effective vaccination rate equals in this classe is $\eta_j^\star+c_k(j+1)$.
Fig. \ref{fig:rationalvacc} represents the corresponding SIR\textsuperscript{(k)}S model with such vaccination scheme. An important question is hence to determine how much vaccine supply $\theta_c$ (critical vaccine supply) is needed to reach and maintain herd immunity.

\begin{figure}[htbp]
\centering
\begin{subfigure}[b]{.7\linewidth}
\centering
\resizebox{\columnwidth}{!}{%
\begin{tikzpicture}[node distance=1cm, auto,
    >=Latex, 
    every node/.append style={align=center},
    int/.style={draw, minimum size=1cm}]
   \node [int] (S)             {$s$};
   \node [int, right=of S] (I) {$i$};
   \node [color=green,very thick,int, right=of I] (R0) {$r_0$};
   \node [color=green,very thick,int, right=of R0] (R1) {$r_1$};
   \node [color=green,very thick,int, right= 2cm of R1] (Rk1) {$r_{k-1}$};
   \draw [dashed,->] (R1) -- node [above] {}  (Rk1);
   
   \coordinate[right=of I] (out);
   \path[->] (S) edge[thick] node {$\beta i(t)$} (I)
             (I) edge node {$\gamma$} (R0)
             (R0) edge node {$c_k(1)$} (R1)
               (Rk1.south)  edge[out=-120, in=-130, thick] node[below] {$\beta \frac{k-1}{k} i(t)$} (I.west)
               (Rk1.north)  edge[out=120, in=60] node {$c_{k}(k)$} (S.north)
               (R1.south) edge[out=-120, in=-130, thick] node[above] {$\beta \frac{1}{k} i(t)$} (I.west);
\path[->] (S.north) edge[color=blue, out=60, in=120] node {$\eta_s$} (R0.north)
           (R1.north) edge[color=blue, out=120, in=60] node {$\eta_1$} (R0.north)
           (Rk1.north) edge[color=blue, out=120, in=60] node {$\eta_{k-1}$} (R0.north);
          
\end{tikzpicture}
}
\caption{}\label{fig:generalvacc}
\end{subfigure}
\begin{subfigure}[b]{.7\linewidth}
\centering
\resizebox{\columnwidth}{!}{%
\begin{tikzpicture}[node distance=1cm, auto,
    >=Latex, 
    every node/.append style={align=center},
    int/.style={draw, minimum size=1cm}]
   \node [int] (S)             {$s$};
   \node [int, right=of S] (I) {$i$};
   \node [color=green,very thick,int, right=of I] (R0) {$r_0$};
   %\node [color=green,very thick,int, right=of R0] (R1) {$r_1$};
   \node [color=green,very thick,int, right= 2cm of R0] (Rj) {$r_{j}$};
   \draw [dashed,->] (R0) -- node [above] {}  (Rj);
   \coordinate[right=of I] (out);
   \path[->] (S) edge[thick] node {$\beta i(t)$} (I)
             (I) edge node {$\gamma$} (R0)
             %(R0) edge node {$c_k(1)$} (R1)
             %(R1) edge node {$\gamma$} (I) 
              (Rj.south)  edge[out=-120, in=-130, thick] node[below] {$\beta \frac{j-1}{k} i(t)$} (I.west);
\path[->] (S.north) edge[color=blue, out=60, in=120] node {$\eta_s^\star $} (R0.north)
           %(R1.north) edge[color=blue, out=120, in=60] node {$\eta_1$} (R0.north)
           (Rj.north) edge[color=blue, out=120, in=60] node [swap] {$\eta_{j}^\star+ c_k(j+1)$ } (R0.north);
\end{tikzpicture}
}
\caption{}\label{fig:rationalvacc}
\end{subfigure}

\caption{Diagram of SIR\textsuperscript{(k)}S epidemic model with vaccination. 
(a) General vaccination scheme where partially susceptible individuals $r_j$ are vaccinated at rate $\eta_j$ for any $j=1,\cdots,k-1,$ respectively. (b) Rational vaccination scheme where fully susceptible individuals are vaccinated at rate $\eta_s^\star$ and, if needed, only one class $r_{j}$ of partially susceptible individuals is vaccinated at rate $\eta_{j}^\star + c_k(j+1) $ for some $j\in\{1,\cdots,k-1\}$.}
\label{fig:SIRkS vaccination}
\end{figure}
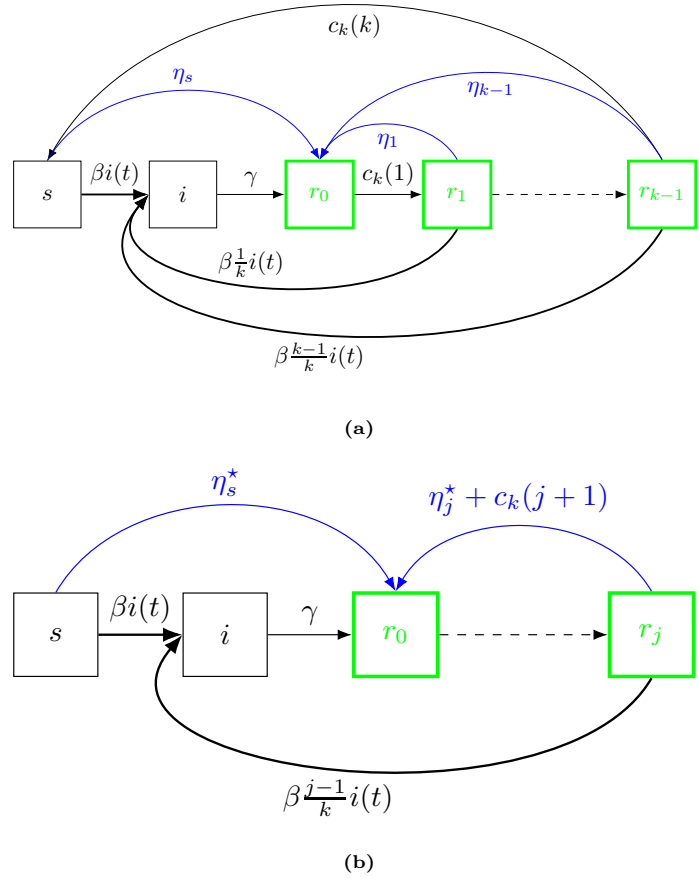

\subsubsection{Parameter choice} In what follows we will compare the three models in terms of steady state prevalence (endemic level) if no preventions are put in place, and how much vaccine that is required to reach and maintain herd immunity. 
In Table \ref{table:Params-values} we show the mid value and range for the model parameters that are used in our \emph{Results} Section when comparing the classic SIRS model with our new models having linear and exponential waning of immunity respectively. Those values are commonly used to characterize diseases like Covid-19, influenza, common cold, etc \citep{byrne2020inferred,davies2020age,hall2022protection,zhang2020changes}. Average life expectancy is set at 80 years. Please note that the results are hence not applicable to childhood diseases (measles, chickenpox, ...) where immunity typically is close to life-long. 
\begin{table}[ht]
	\renewcommand{\arraystretch}{1.5}
	\centering
	\caption{Parameters description, their baseline values, and ranges of variation studied.}
	\label{table:Params-values}
	\begin{tabular}[t]{lp{3cm}cc}
		\toprule
		Parameter &Description& Baseline value&Range\\
		\midrule
		$ R_0 $& Basic reproduction number&5&1--7\\
		$ \gamma^{-1} $ &Mean infectious period (in days) &7&3--14\\
		$ \omega^{-1} $ &Average immune period (in months) &12 &6--24\\
		\bottomrule
	\end{tabular}
\end{table}

\subsection{Main results}\label{sec-Results}
We now compare the three epidemic models, the classic SIRS, the model with linearly waning of immunity, and the model with exponentially decaying immunity, all three models being calibrated by having the same cumulative immunity. Analytical results are obtained for $k=2$ (Appendix \ref{appendix1}) and conjectured to any $k>2.$

\subsubsection{The basic reproduction number $R_0$}
The basic reproduction number, defined as the number of secondary cases produced by one infectious individual in a fully susceptible population, equals $R_0 = \frac{\beta}{\gamma+\mu}$ for the classic SIRS model as well as our extended models. This holds true because the models differ only in terms of how immunity wanes and in the initial phase of epidemic when nearly everyone is susceptible immunity waning has no impact. From now on we assume that $R_0>1$ -- otherwise none of the three models will experience any outbreak and vaccination is not necessary.

\subsubsection{Long-term prevalence in the absence of vaccination}
A comparison of long-term prevalence is obtained by setting the defining differential equations for each of the three models (given in Section \emph{Materials and Methods}) equal to 0 and solving the equation system. When $R_0>1$ there is one stable solution with a positive fraction infectives $\hat i$, the endemic level or stable prevalence. In Fig. \ref{fig:EndLevelandRC} these endemic levels are given for the three models as a function of $R_0 $ (keeping the mean infectious period and average cumulative immunity fixed). It can be seen that the model linear waning of immunity results in larger endemic levels than the SIRS epidemic. The model with exponential waning of immunity makes the long-term prevalence even larger. When $R_0\approx5$ as for Covid-19 Delta strain (e.g. \cite{zhang2020changes}) and with a mean infectious period of 7 days and an average duration of immunity of 1 year, the stable prevalence will consist of 1.6\% being infectious according to the SIRS model. The linear waning model has about twice the endemic level (3\% of the population) and the model with exponential waning has stable prevalence 4.9\% (Fig. \ref{fig:EndLevelandRC}).

\begin{figure}[htbp]
\includegraphics[width=.7\linewidth]{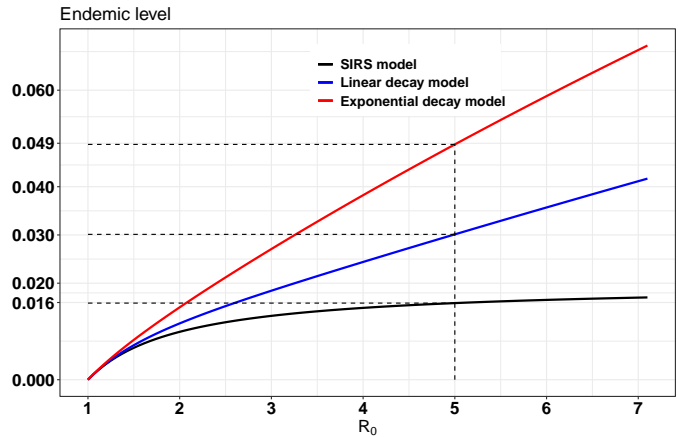} 
\centering
\caption{Endemic levels from the standard model and the SIR\textsuperscript{(k)}S model with linear and exponential decay functions.}
\label{fig:EndLevelandRC}
\end{figure}

\subsubsection{Critical vaccine to reach and maintain herd immunity}
In Fig. \ref{fig:VaccineSupplyR0} we show the necessary amount of vaccine supply (for the three models) continuously needed to reach and maintain herd immunity (see \emph{Materials and Methods} for the derivation). 
It is seen that the standard SIRS model requires a lower vaccine supply as compared to the two models with gradual waning, and that the model with exponential immunity waning require the largest vaccine supply. 
 Moreover, the difference between the three models grows with $R_0$. Take as illustration $R_0\approx5$ (and mean infectious period 7 days and cumulative immunity 1 year (inspired by COVID-19 pandemic Delta strain), then the classic SIRS model requires vaccinating at rate 0.81 to reach herd immunity (so 8.1 million vaccinations per year in a population of 10 million), the model with linear waning requires 1.25, and the exponential decay model requires 2.14, some 55\% and 164\% more vaccinations, respectively.
\begin{figure}[htbp]
    \includegraphics[width=.7\linewidth]{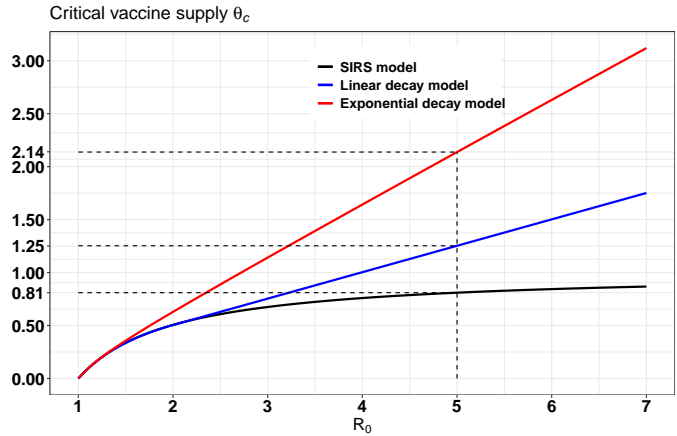} 
    \centering
  \caption{Critical amount of vaccines needed to reach herd immunity. }
  \label{fig:VaccineSupplyR0}
\end{figure}

\subsubsection{Comparison of the models for endemic diseases}
For new emerging diseases $R_0$ is often estimated from the initial growth rate of the epidemic (together with knowledge about the generation time of the disease) \citep{hu2021infectivity}. Then the natural calibration of models was to assume the same cumulative immunity $\omega^{-1}$, and the same transmission rate $\beta$, recovery rate $\gamma$  as done above. 

For diseases that are currently endemic, a more natural calibration is instead to assume the different models have the same cumulative immunity $\omega^{-1}$ and the same recovery rate $\gamma$, and that the endemic level equals the empirical level (so fixing the endemic level  rather than $R_0$). An argument for this calibration is that, while immunity duration and infectious period may be easy to estimate, the same is usually not true for the rate of infectious contacts $\beta$, which in turns determines $R_0$. Fig \ref{fig:R0 comparison} shows the estimated $R_0$ for the different models based on such a calibration, for different values of the endemic level (stable prevalence). The estimate based on the SIRS epidemic was derived in \citep{heffernan2005perspectives}.

\begin{figure}[htbp]
\centering
  \begin{subfigure}{.7\linewidth}
    \centering
    \includegraphics[width=\linewidth]{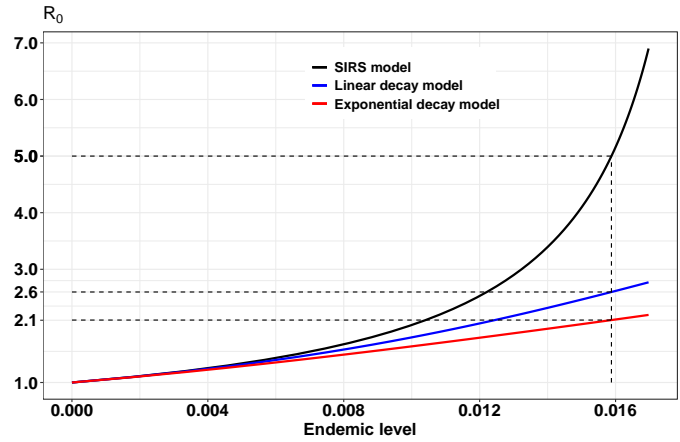}
    \caption{}
  \label{fig:R0 comparison}
  \end{subfigure}
  \begin{subfigure}{.7\linewidth}
    \centering
    \includegraphics[width=\linewidth]{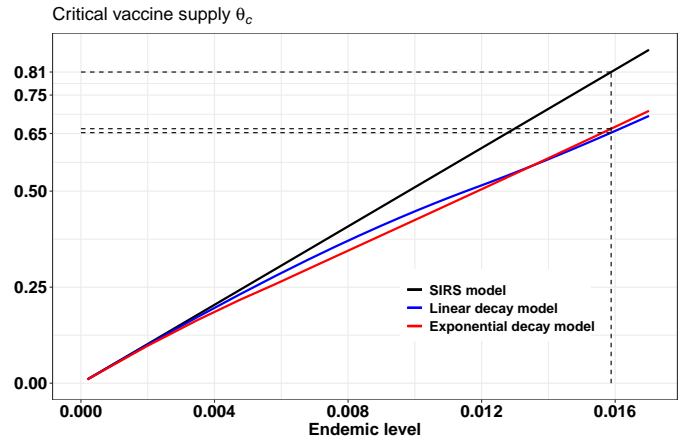}
    \caption{}
    \label{fig:VaccineSupplyEnd}
  \end{subfigure}
  \caption{ (a) $R_0$ estimates from prevalence (endemic) data and (b) the corresponding amount of vaccines needed to reach herd immunity.}
  \label{fig:EndLevel and RC}
\end{figure}

As seen in the figure, the SIRS model results estimates $R_0$ being larger than the models with gradual waning, and in particular compared to the model with exponential waning.  When the endemic level (prevalence) is 1.6\% of the population, the exponential decay model and the linear decay model estimate $R_0$ to 2.1 and 2.6, respectively, which are 58\% and 48\% less than the value 5 estimated by the classic SIRS model. Yet, when the prevalence is low, all models result in approximately the same $R_0$.

If we instead use the endemic level (together with fixing cumulative immunity and the infectious period) to estimate the required amount of vaccine supply, for the different models, then the classic SIRS model requires higher vaccines. This is because the disease is spreading faster in the SIRS setting than in the models with gradual waning. This is illustrated in Fig. \ref{fig:VaccineSupplyEnd} which shows that slightly less vaccine supply is required to reach herd immunity under the SIR\textsuperscript{($\infty$)}S models in comparison to the standard SIRS model, and this applies for both linear and exponential decay modes of immunity (19\% less vaccines if the disease persists in 1.6\% of the population).
\section{Discussion}
The classic SIRS model assumes that immunity at the individual level is binary, i.e.\ each individual is either fully immune or fully susceptible. This paper relaxes this assumption by presenting and analyzing a novel model allowing for gradual waning of immunity, either linear waning or exponential waning. It is shown that, when calibrating the models by assuming the same $R_0$ and mean infectious period and cumulative infectivity, the new more realistic models result in higher endemic levels if prevention is not put in place, and that a substantially larger vaccine supply is required to reach and maintain herd immunity. The most realistic model having exponential waning of immunity is shown to exhibit the biggest difference between the classic SIRS model.

The studied model can in principle be defined also for other forms of deterministic immunity waning modes. 

Our model extends the SIRS epidemic to allow for gradual waning of immunity. Many other model assumptions are admittedly unrealistic. It would of course be interesting to study this extension to gradual waning of immunity when also allowing for e.g.\ not obtaining full immunity from start, having different forms of immunity for vaccine as compared to disease exposure, considering asymptomatic and symptomatic individuals. Still it our belief that the qualitative feature, that gradual waning requires bigger vaccine supply, remains.

Another assumption was that the immunity status of individuals were known when determining whom to vaccinate. In case immunity wanes deterministically, as in the two new models, this might be a reasonable approximation since the time of last vaccination or infection might be known, but when immunity wanes in one leap after an exponential time, this may not be possible. Analysing models where the exact immunity status, perhaps also introducing randomness in waning decay, is an interesting problem to analyse.

\section{Material and Methods}
We summarize the methods used to establish the results listed in Section \emph{Results}. We start by formulating the SI$\mbox{R}^{(k)}$S model with gradual waning of immunity including linear and exponential decaying functions. A rigorous mathematical analysis of the model is given in the case $k=2$ (Appendix \ref{appendix1}), thus, allowing to make conjectures for any $k>1.$
\subsection{General SI$\mbox{R}^{(k)}$S epidemic model}\label{Sec:general SIRkS model}
The general SI$\mbox{R}^{(k)}$S model (Fig. \ref{fig:SIRkS}) we introduced in this paper aims to approximate the linear and exponential immunity decay modes (Fig. \ref{fig:decay modes}) using step functions such that all immunity is lost in $k$ jumps, starting by complete immunity to zero immunity, loosing a portion $\frac{1}{k}$ each step. Here we outline how to construct the SI$\mbox{R}^{(k)}$S model following any function of waning of immunity.

Suppose that for a given decaying function and an arbitrary integer $k\geq 1$, immunity level $ \frac{k-j}{k}$ lasts for an exponentially distributed time with rate $c_k(j+1)>0$ before decaying to $ \frac{k-j}{k}-\frac{1}{k}$ with $j$ going from 0 to $k-1$, such that the rates $  \{ c_k(j)\}_{j=1}^k $ verify the constant cumulative immunity condition
\begin{equation}\label{Susc.lev.cumul.immunity}
		\sum\limits_{j=0}^{k-1} \frac{1}{c_k(j+1)}\left(1- \frac{j}{k}\right)  = \frac{1}{\omega},
\end{equation}
and approximate the underlying waning of immunity. Denote by $\{r_j(t)\}_{j=0}^{k-1} $ the fractions of individuals (at time $t$) with the immunity level $\frac{k-j}{k}$. Clearly, recovered infectious individuals enter to the highest immunity class, $r_0(t)$, and then their immunity declines through $k$ steps.  The class of individuals $r_{k-1}(t)$ has the lowest immunity level, $\frac{1}{k}$, to be lost altogether to become fully susceptible again.
Thus, the resulting SI$\mbox{R}^{(k)}$S model with gradual decay of immunity can be formulated as in the following equation
\begin{equation}\label{SIRkS}
	\begin{array}{lll}
		s'(t) &=& \mu - \beta s(t) i(t) + c_{k}(k) r_{k-1}(t) - \mu s(t),\\
		i'(t) &=& \beta s(t) i(t) + \beta \sum\limits_{j=1}^{k-1}\frac{j}{k} r_{j}(t) i(t) - (\gamma+\mu) i(t),\\
		r_0'(t) &=& \gamma i(t) - (c_{k}(1)+\mu) r_0(t),\\
		r_j'(t) &=& c_{k}(j) r_{j-1}(t)  - \beta \frac{j}{k}r_{j}(t)i(t) - (c_k(j+1)+\mu) r_j(t),
	\end{array}
\end{equation}
for $j=1,\cdots,k-1$, where we omit the dependence on $k$ in $(s(t), i(t),r_{0}(t),\cdots, r_{k-1}(t))$ for simplicity of notation. We will also use the notation $(\hat s, 0, \hat r_{0},\cdots, \hat r_{k-1})$ for the disease free equilibrium and $(\bar s, \bar i, \bar r_{0},\cdots, \bar r_{k-1})$ for the endemic equilibrium.

The disease free equilibrium of \eqref{SIRkS} is $E^{k,0}=(1, 0, \cdots
,0)$ and the basic reproduction number is given by $$ R_0 = \dfrac{\beta}{\gamma+\mu}. $$

The sequence $ \{ c_k(j)\}_{j=1}^k $ is chosen to fit the required decay mode of immunity and the fixed cumulative immunity condition \eqref{Susc.lev.cumul.immunity} regardless of $k$.

\subsection{SI$\mbox{R}^{(k)}$S models with linear and exponential waning of immunity}\label{lin and exp rates}
The linear and the exponential functions modelling the decay of immunity (Fig. \ref{fig:decay modes}) given by
\begin{equation}
g(u) = \left(1-\frac{\omega}{2}a\right){\boldsymbol 1}_{\{a<\frac{2}{\omega}\}} \, \mbox{ and } h(u) = \exp\left( -\omega u\right),\, u\geq 0,
\end{equation}
respectively, with the indicator function ${\boldsymbol 1}_A $ equals 1 if the condition $A$ holds and 0 otherwise, verify the same cumulative immunity condition
\begin{align*}
    \int_{\mathbb{R}^+}g(u)\, du = \int_{\mathbb{R}^+}h(u)\, du = \frac{1}{\omega},
\end{align*}
which is equal to the average cumulative immunity from the standard SIRS model with immunity waning rate $\omega$. One can fit the linear decay mode $g$ by letting
\begin{align}\label{lin_rates}
    c_k(j) =\frac{k+1}{2}\omega, \,\,j=1,\cdots,k,
\end{align}
which verifies the condition \eqref{Susc.lev.cumul.immunity}. We refer to the model \eqref{SIRkS} with \eqref{lin_rates} as the SI$\mbox{R}^{(k)}$S model with linear decay of immunity. A way to fit the exponential decay mode is to choose, for any $j=1,\cdots,k-1,$
\begin{align}\label{exp_rates}
c_k(j) = \left( -\frac{1}{\omega}\log(1- j(k-1)/k^2  ) - \sum\limits_{l=1}^{j-1} \frac{1}{c_k(l)}\right)^{-1},
\end{align}
and obtain $c_k(k)$ by solving the equation \eqref{Susc.lev.cumul.immunity}. We refer to the resulting model as the SI$\mbox{R}^{(k)}$S model with exponential decay of immunity. Fig. \ref{fig:StepFunctions} plots the corresponding step functions for $k=10$ and where the duration of each immunity level is set to its expected value.
\begin{figure}[htbp]
    \includegraphics[width=.7\linewidth]{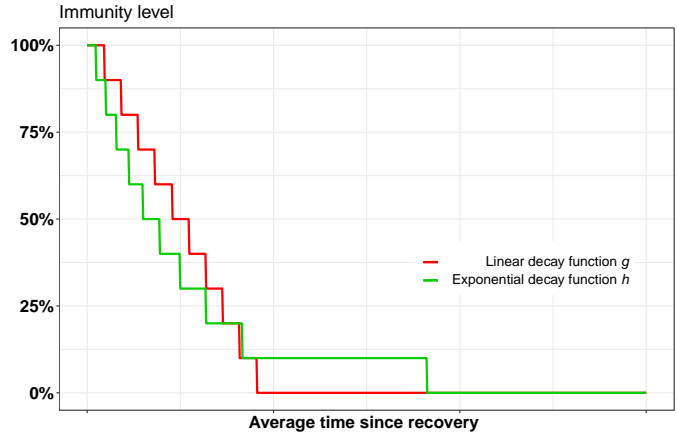} 
    \centering
  \caption{Step functions approximating the immunity decay functions $g$ and $h$ in $k=10$ steps. }
  \label{fig:StepFunctions}
\end{figure}

Different rates $  \{ c_k(j)\}_{j=1}^k $ other than \eqref{lin_rates} and \eqref{exp_rates} can be considered to approximate the linear and the exponential decay modes, respectively. Still, they will have no effect on the model dynamics as $k\rightarrow\infty$.

\subsection{SI$\mbox{R}^{(k)}$S model with vaccination}\label{Sec.Vacc_METHOD}
We now introduce vaccination into the model and make the simplifying that immunity from vaccination is identical to immunity from disease exposure (complete immunity with the same decaying mode). The resulting SI$\mbox{R}^{(k)}$S model with a general vaccination scheme can be written as
\begin{equation}\label{SIRkS vaccination}
	\begin{array}{lll}
		 s^{\boldsymbol \eta}{}' (t) =& \mu - \beta s^{\boldsymbol \eta}(t) i^{\boldsymbol \eta}(t) + c_{k}(k) r_{k-1}^{\boldsymbol \eta}(t) - \mu s^{\boldsymbol \eta}(t) - \eta_s s^{\boldsymbol \eta}(t),\\
		i^{\boldsymbol{\eta}}{}' (t) =& \beta s^{\boldsymbol{\eta}}(t) i^{\boldsymbol{\eta}}(t) + \beta \sum\limits_{j=1}^{k-1}\frac{j}{k} r_{j}^{\boldsymbol{\eta}}(t) i^{\boldsymbol{\eta}}(t) - (\gamma+\mu) i^{\boldsymbol{\eta}}(t),\\
		r_0^{\boldsymbol{\eta}}{}'(t) =& \eta_s s^{\boldsymbol{\eta}}(t) + \gamma i^{\boldsymbol{\eta}}(t) - (c_{k}(1)+\mu) r_0^{\boldsymbol{\eta}}(t)+ \sum\limits_{j=1}^{k-1} \eta_j r_j^{\boldsymbol{\eta}}(t),\\
		r_j^{\boldsymbol{\eta}}{}'(t) =& c_{k}(j) r_{j-1}^{\boldsymbol{\eta}}(t)  - \beta \frac{j}{k}r_{j}^{\boldsymbol{\eta}}(t)i^{\boldsymbol{\eta}}(t) - (c_k(j+1)+\mu) r_j^{\boldsymbol{\eta}}(t) \\ 
		&- \eta_j r_j^{\boldsymbol{\eta}}(t),
	\end{array}
\end{equation}
for $j=1,\cdots,k-1,$ where $\eta_s$ and $\eta_j\geq 0,\,j=1,\cdots,k-1$ are the rates of vaccination of $s^{\boldsymbol{\eta}}(t)$ and $r_j^{\boldsymbol{\eta}}(t),\,j=1,\cdots,k-1$, respectively. The disease free equilibrium $E_v^{k,0}=(\hat{s}^{\boldsymbol{\eta}}, 0, \hat{r}_{0}^{\boldsymbol{\eta}}, \cdots, \hat{r}_{k-1}^{\boldsymbol{\eta}})$ is given by
\begin{align}
\hat{s}^{\boldsymbol{\eta}} &= \frac{\mu}{\mu+\eta_s - \eta_s c_k(k)A_k B_k  }, \\
\hat{r}_{j}^{\boldsymbol{\eta}} &= \eta_s \hat{s}^{\boldsymbol{\eta}} A_k B_j , \,\,j=1,\cdots,k-1,\\
\hat{r}_{0}^{\boldsymbol{\eta}} &=  1 - \hat{s}^{\boldsymbol{\eta}} - \sum\limits_{j=1}^{k-1} \hat{r}_{j}^{\boldsymbol{\eta}},
\end{align}
where $B_j = \prod\limits_{l=1}^{j-1}\frac{c_k(l)}{\mu+c_k(l+1)+\eta_l}, j=1,\cdots,k,$ and 
    $A_k= \left( \mu+c_k(1)- \sum\limits_{j=1}^{k-1}\eta_j B_j \right)^{-1}. $
The effective reproduction number is given by 
\begin{align}
R_e^{(k)}=R_0\left( \hat{s}^{\boldsymbol{\eta}} + \sum\limits_{j=1}^{k-1} \frac{j}{k} \hat{r}_{j}^{\boldsymbol{\eta}} \right),
\end{align}
where we recall that $\frac{j}{k} $ is the relative susceptibility in the $j$'th immunity state.
\subsection{Critical vaccine supply}
The vaccine usage (per unit of time) for the general vaccination scheme of the previous subsection, once it has reached steady state, is given by
\begin{equation}
\theta^{(k)}= \eta_s\hat{s}^{\boldsymbol{\eta}} + \sum\limits_{j=1}^{k-1} \eta_j\hat{r}_{j}^{\boldsymbol{\eta}}.
\end{equation}
For fixed $k,$ the best vaccination strategy, given some amount of vaccine supply delivered continuously, is clearly to vaccinate the most susceptible (=least immune) individuals. More precisely, the best strategy is to immediately vaccinate individuals having higher susceptibility than some class $j$, to vaccinate individuals in susceptibility class $j$ at rate $\eta_j^\star$, and to not vaccinate individuals in susceptibility classes lower than $j$ (i.e.\ $r_0,\dots r_{j-1}$), where $j$ and $\eta_j^\star$ will depend on the amount of available vaccine. With this vaccination strategy, individuals moving to state $j+1$ (with rate $c_k(j+1)$) will be vaccinated immediately and no individuals will ever reach higher susceptibility classes, so the actual vaccination rate among individuals in class $r_j$ is $\eta_j^\star+c_k(j+1)$. Newborns should also be vaccinated, at their incoming rate $\mu$, when we vaccinate in class the class $r_j$. For this strategy to be successful in the long run the amount of vaccine supply should be such that the corresponding $j$ and $\eta_j^\star$ result in $ R^{\boldsymbol{\eta}}\leq 1$. Hence, the critical vaccine supply can be written as
\begin{equation}\label{criticalv}
\theta_c^{(k)}=
\begin{cases}
\eta_s^\star\hat{s}^{\boldsymbol{\eta}} \qquad\qquad\,\,\mbox{ if we only vaccinate fully susceptibles,}\\
\mu + (\eta_j^\star+c_k(j+1))\hat{r}_{j}^{\boldsymbol{\eta}}\, \mbox{ if we vaccinate in the class $r_j$}.
\end{cases}
\end{equation}
A detailed derivation of \eqref{criticalv} when $k=2$ is given in Proposition \ref{proposition}.

An alternative way to derive the critical vaccine supply is to assume that the disease is in the endemic steady state and then vaccinate in each immunity class and to check that the disease-free equilibrium $E_v^{k,0}$ is the only stable steady state.
\subsection{Standard SIRS epidemic model}
It has been shown that the standard SIRS model \eqref{SIRS} admits a unique endemic equilibrium when $R_0>1$ and only the disease-free equilibrium exists when $R_0\leq 1$
\citep{hethcote1976qualitative}. When susceptibles are vaccinated at a constant rate $\eta$, the resulting SIRS model has a unique endemic equilibrium when $R^{\boldsymbol{\eta}}>1$, and only the disease-free equilibrium exists when $R^{\boldsymbol{\eta}}\leq 1$, where $R^{\boldsymbol{\eta}} = R_0 \hat{s}$ is the average number of new infections generated by an infective individual in a population with susceptible fraction of $\hat{s}$ \citep{hethcote1978immunization}. Since $ \hat{s} = \frac{\mu+\omega}{\mu+\omega+\eta} $, the minimum vaccination rate to drive the epidemic dynamic to the disease-free state (i.e., at which $R_0 \hat{s}=1$) is given by $$\eta_c = (\omega+\mu)(R_0-1).$$
Hence, the critical vaccine supply required to achieve and maintain the disease free equilibrium is defined as the product of the rate $\eta_c$ and the fraction $\hat{s}=1/R_0$ by 
\begin{align}
    \theta_c^{(1)}&=  (\omega+\mu)\left( 1-\frac{1}{R_0}\right).
\end{align}

\subsection{Endemic level}\label{Sec_end_lev}
For $k=2$, we proved that the SIR\textsuperscript{(2)}S model without vaccination (with vaccination) has a unique endemic equilibrium whenever $R_0>1$ ($R_e^{(2)} > 1$). See Lemmas \ref{lem1} and \ref{lem2}.
For $k>2$, computing the endemic level from the SIR\textsuperscript{(k)}S model \eqref{SIRkS} implies finding feasible roots of a $k$'th degree polynomial function. This is numerically done for the parameter values in the (finely discretized) ranges in Table \ref{table:Params-values} as finding explicit formulae of the roots of high-degree polynomials is a challenging task. We obtain that the SIR\textsuperscript{(k)}S model admits a unique endemic equilibrium for the values in Table \ref{table:Params-values}. Those numerical simulations suggest the following conjectures.
\begin{conj}~
\begin{enumerate}
    \item The SIR\textsuperscript{(k)}S model \eqref{SIRkS} has a unique endemic equilibrium $E^{k,*} =(\bar s, \bar i, \bar r_0, \cdots, \bar r_{k-1})$ if and only if $R_0>1$.
    \item The SIR\textsuperscript{(k)}S model \eqref{SIRkS vaccination} has a unique endemic equilibrium $E_v^{k,*} = (\bar s^{\boldsymbol{\eta}}, \bar i^{\boldsymbol{\eta}}, \bar r_0^{\boldsymbol{\eta}}, \cdots, \bar r_{k-1}^{\boldsymbol{\eta}})$ if and only if $R_e^{(k)}>1$.
\end{enumerate}
\end{conj}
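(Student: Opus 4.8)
The plan is to collapse the $(k+2)$-dimensional equilibrium system to a single scalar equation in the endemic prevalence $\bar i$ and then count its roots. Setting every right-hand side of \eqref{SIRkS} to zero, the $r_0$-equation gives $\bar r_0=\gamma\bar i/(c_k(1)+\mu)$ and the recursion in the $r_j$-equations gives, for $j\ge1$,
\[ \bar r_j=\bar r_0\prod_{l=1}^{j}\frac{c_k(l)}{\beta\frac{l}{k}\bar i+c_k(l+1)+\mu}, \]
so every compartment is an explicit positive rational function of $\bar i$. Dividing the $i$-equation by $\bar i>0$ gives the effective-susceptibility identity $R_0\big(\bar s+\sum_{j=1}^{k-1}\frac{j}{k}\bar r_j\big)=1$, and summing all equations (the transmission and waning terms telescope away) gives, at equilibrium, the conservation law $\bar s+\bar i+\sum_{j=0}^{k-1}\bar r_j=1$. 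Eliminating $\bar s$ between these two collapses the whole system to
\[ G(\bar i):=\bar i+\sum_{j=0}^{k-1}\frac{k-j}{k}\,\bar r_j(\bar i)=1-\frac{1}{R_0}. \]

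From $G$ the easy directions follow immediately. Since $G(0)=0$, $G(\bar i)>0$ for $\bar i>0$, and $G(\bar i)\to+\infty$ as $\bar i\to\infty$ (the $\bar i$ and $\bar r_0$ terms grow linearly), the reduced equation has no positive root when $1-1/R_0\le0$, which rules out an endemic equilibrium for $R_0\le1$ and yields the ``only if'' direction; and for $R_0>1$ the intermediate value theorem produces a root $\bar i^*\in(0,1-1/R_0)$, after which one verifies $\bar s>0$ and that all coordinates lie in $(0,1)$ so that $E^{k,*}$ is genuinely feasible. The identical reduction handles the vaccinated system \eqref{SIRkS vaccination}: the recursion only replaces $c_k(l+1)+\mu$ by $c_k(l+1)+\mu+\eta_l$, the $r_0$-equation now couples $\bar r_0$ linearly to $\bar s$ through the $\eta_j$, and eliminating $\bar s$ yields $G^{\boldsymbol{\eta}}(\bar i)=1-R_e^{(k)}/R_0$, so the threshold $R_0>1$ is replaced by $R_e^{(k)}>1$ (consistent with the explicit $E_v^{k,0}$, whose effective susceptibility equals $R_e^{(k)}/R_0$).

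The real work is uniqueness: that $G$ (resp.\ $G^{\boldsymbol{\eta}}$) meets the level $1-1/R_0$ exactly once. This is precisely the step established only for $k=2$ in Appendix \ref{appendix1} (Lemmas \ref{lem1} and \ref{lem2}) and conjectured beyond. I would aim to prove that $G$ is strictly increasing, equivalently that the effective susceptibility $\Phi(\bar i)=1-G(\bar i)$ strictly decreases as prevalence rises. Writing $P_j(\bar i)=\prod_{l=1}^{j}c_k(l)/(\beta\frac{l}{k}\bar i+c_k(l+1)+\mu)$, differentiation gives
\[ G'(\bar i)=1+\frac{\gamma}{c_k(1)+\mu}\sum_{j=0}^{k-1}\frac{k-j}{k}\,P_j(\bar i)\Big(1-\sum_{l=1}^{j}\frac{\beta\frac{l}{k}\bar i}{\beta\frac{l}{k}\bar i+c_k(l+1)+\mu}\Big). \]
For $k=2$ each $\bar r_j$ is monotone and every bracket is nonnegative, so $G'>0$ at once and the reduced equation is a quadratic with a single admissible root. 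For $k\ge3$, however, the compartments $\bar r_j$ with $j\ge2$ are non-monotone in $\bar i$ (degree one over degree $j$), the inner sums exceed $1$, and the associated terms become negative; the crux is to show that the baseline $1$ together with the strictly increasing contributions of $\bar r_0$ and $\bar r_1$ dominates these negative terms uniformly over the parameter ranges of Table \ref{table:Params-values} and over all rate sequences obeying \eqref{Susc.lev.cumul.immunity}. I expect this domination to be the main obstacle.

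Should a direct bound on $G'$ prove elusive, I would clear denominators in $G(\bar i)=1-1/R_0$ to obtain a degree-$k$ polynomial whose leading coefficient is positive and whose constant term is negative when $R_0>1$; exhibiting a single sign change in its coefficient sequence (via Descartes' rule, using that the coefficients are explicit in the $c_k(j)$) would force exactly one positive root. A more structural alternative, which would settle all $k$ simultaneously, is to read $(\bar s,\bar i,\{\bar r_j\})$ as the stationary law of the single-individual Markov chain cycling $S\to I\to R_0\to\cdots\to R_{k-1}\to S$ under force of infection $\lambda=\beta\bar i$ with reinfection shortcuts $R_j\to I$ at rate $\beta\frac{j}{k}\bar i$, and to prove by a coupling argument that the stationary effective susceptibility is monotone decreasing in $\lambda$. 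Establishing that monotonicity is, I believe, the decisive and most delicate ingredient for general $k$.
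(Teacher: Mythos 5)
Note first that the statement you are proving is stated in the paper as a \emph{conjecture}: the authors prove it analytically only for $k=2$ (Lemmas \ref{lem1} and \ref{lem2}, via an explicit quadratic in $\bar i$ whose constant term is proportional to $1-R_e^{(2)}$), and for $k>2$ they support it only by numerically locating the feasible roots of the resulting degree-$k$ polynomial over the discretized parameter ranges of Table \ref{table:Params-values}. So there is no full proof in the paper to compare against, and your proposal should not be judged as failing for not closing the general case.

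Within that framing, your reduction is correct and in fact goes somewhat further analytically than what the paper writes down: the recursion $\bar r_j=\bar r_0\prod_{l=1}^{j}c_k(l)/(\beta\tfrac{l}{k}\bar i+c_k(l+1)+\mu)$, the identity $R_0(\bar s+\sum_j\tfrac{j}{k}\bar r_j)=1$, and the telescoped conservation law do collapse the system to $G(\bar i)=1-1/R_0$, which gives existence for $R_0>1$ and nonexistence for $R_0\le 1$ for every $k$ (the paper only asserts the ``only if'' direction implicitly). For $k=2$ your monotonicity argument for $G$ is equivalent to the paper's sign analysis of the quadratic's coefficients. Two caveats: (i) the feasibility check $\bar s>0$ at the root of $G$ is waved at rather than done --- it is cleaner to express $\bar s$ from the $s$-equation as $(\mu+c_k(k)\bar r_{k-1})/(\beta\bar i+\mu)>0$ and fold that into the scalar equation, so positivity is automatic; (ii) in the vaccinated system $\bar r_0$ is coupled to $\bar s$ through $\eta_s$ and the $\eta_j$, so the claimed reduction to $G^{\boldsymbol\eta}(\bar i)=1-R_e^{(k)}/R_0$ needs to be carried out explicitly rather than asserted by analogy (the paper's $k=2$ computation confirms the pattern but only for that case). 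The genuine open point --- that $G$ meets the level $1-1/R_0$ exactly once for $k\ge 3$, where the individual $\bar r_j$ are no longer monotone in $\bar i$ --- is exactly the point the paper also leaves open, and you identify it accurately; your proposed routes (a uniform bound on $G'$, a single sign change via Descartes' rule, or stochastic monotonicity of the stationary effective susceptibility in the force of infection) are all plausible but none is carried out, so the conjecture remains a conjecture.
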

\subsection{Critical immunity level}
Fig. \ref{fig:Imm Lev Crit} shows the immunity level $\iota$ (as a function of $R_0$) at which individuals have to be vaccinated in order to reach herd immunity from the limiting SIR\textsuperscript{(k)}S models with linear and exponential decays of immunity. We recall that the classic SIRS model assume all individuals are either completely immune or completely susceptible, something which is not true in the models for gradual immunity waning. For parameter choices resembling the Covid-19 Delta strain ($R_0\approx 5$, $\omega^{-1}=12$ months and $\gamma^{-1}=7$ days), herd immunity will only be achieved if individuals are vaccinated before their immunity drops below $\iota\approx60\%$, according to the SIR\textsuperscript{(k)}S models with linear and exponential decays of immunity. This also means that individuals should get booster vaccines approximately every 6 months since their last vaccination/infection.
\begin{figure}[htbp]
\includegraphics[width=.7\linewidth]{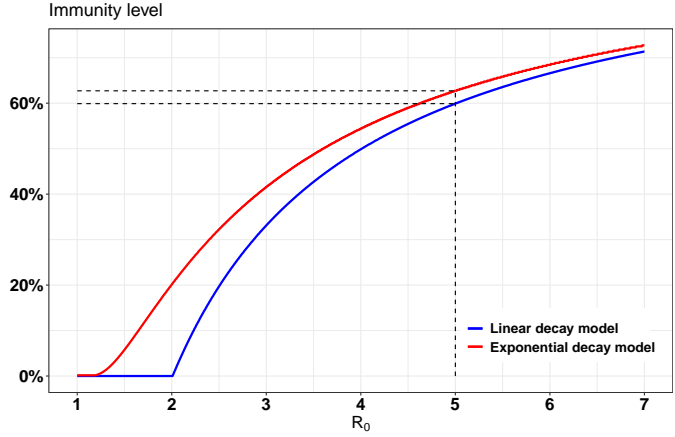} 
    \centering
  \caption{Immunity level $\iota$ at which individuals have to be vaccinated as a function of $R_0$,  with other parameter values from Table \ref{table:Params-values}. } 
  \label{fig:Imm Lev Crit}
\end{figure}

\subsection{Connection with ODE-PDE model}\label{Sec.PDE}
As $k\to\infty$, the number of states in the SIR\textsuperscript{(k)}S model increases and there is a continuity of immunity states. This limiting model can be described by an ODE-PDE system. Since we are interested in deterministic linear and exponential waning of immunity, knowing an individual's immunity level is equivalent to knowing the amount of time since his last recovery: time since recovery. The corresponding models can be formulated as follows.

\textbf{Linear decay model.} Assume a continuous linear decay of immunity and let $r(t,a)$ to be the fraction (density) of recovered individuals at time $t$ with age $a$ since recovery. For an infinitesimal time step $dt$, the individuals in $r(t,a)$ are those among $r(t- dt,a - dt) $ who will neither die nor get infected during the interval time $[t,t+dt]$, that is, we have for any $a>0$
\begin{align}\label{potentialPDE_disc}
r(t,a) =& r(t- dt,a - dt) \left( 1-\mu dt \right) \nonumber\\
& \times \left( 1-\beta \left(\left(\frac{\omega}{2}a-1\right){\boldsymbol 1}_{\{a<\frac{2}{\omega}\}}+1 \right)\, i(t) dt\right),\\
r(t,0) =& \gamma i(t).
\end{align}
Rearranging the \eqref{potentialPDE_disc} and sending $dt$ to $0$, it yields that
\begin{align}
    \dfrac{\partial r(t,a)}{\partial t} + \dfrac{\partial r(t,a)}{\partial a} = & - \beta \left(\left(\frac{\omega}{2}a-1\right){\boldsymbol 1}_{\{a<\frac{2}{\omega}\}}+1 \right)\, r(t,a) i(t)\nonumber\\
    &- \mu r(t,a).
\end{align}
Then the PDE-ODE model has the following form
\begin{align}\label{SIRkSPDEODELin}
&s'(t) = \mu - \beta s(t) i(t) - \mu s(t),  \nonumber\\
&i'(t) = \beta  \left(s(t)+\int_{2/\omega}^\infty\,r(t,\tau)\, d\tau \,\right) i(t) + \beta \int_0^{2/\omega}\, \frac{\omega}{2} \tau \,  r(t,\tau)\, d\tau \, i(t) \nonumber \\
  & \hspace{1cm} - (\gamma+\mu) i(t), \nonumber\\
&\dfrac{\partial r(t,a)}{\partial t} + \dfrac{\partial r(t,a)}{\partial a} = - \beta\left(\left(\frac{\omega}{2}a-1\right){\boldsymbol 1}_{\{a<\frac{2}{\omega}\}}+1 \right)\, r(t,a) i(t) \nonumber \\
 & \hspace{2.8cm} - \mu r(t,a),\quad a>0,
\end{align}
%\begin{equation}\label{SIRkSPDEODELin}
%	\begin{array}{lll}
%		&s'(t) = \mu - \beta s(t) i(t) - \mu s(t),\\
%		&i'(t) = \beta  \left(s(t)+\int\limits_{2/\omega}^\infty\,r(t,\tau)\, d\tau \,\right) i(t) + \beta \int\limits_0^{2/\omega}\, \frac{\omega}{2} \tau \,  r(t,\tau)\, d\tau \, i(t)  - (\gamma+\mu) i(t),\\
%		&\dfrac{\partial r(t,a)}{\partial t} + \dfrac{\partial r(t,a)}{\partial a} = - \beta\left(\left(\frac{\omega}{2}a-1\right){\boldsymbol 1}_{\{a<\frac{2}{\omega}\}}+1 \right)\, r(t,a) i(t) - \mu r(t,a),\quad a>0,
%	\end{array}
%\end{equation}
with the boundary condition $r(t,0) = \gamma i(t).$

\textbf{Exponential decay model.} Similarly to the previous paragraph, the corresponding PDE-ODE model when immunity wanes exponentially can be written as
\begin{align}\label{SIRkSPDEODEexp}
&s'(t) = \mu - \beta s(t) i(t) - \mu s(t), \nonumber  \\
&i'(t) = \beta  \left(s(t)+\int_{2/\omega}^\infty\,r(t,\tau)\, d\tau \,\right) i(t)  \nonumber \\
  & \hspace{1cm} + \beta \int_0^{\infty}\, \left( 1-e^{-\omega \tau}\right) \,  r(t,\tau)\, d\tau \, i(t) - (\gamma+\mu) i(t), \nonumber \\
&\dfrac{\partial r(t,a)}{\partial t} + \dfrac{\partial r(t,a)}{\partial a} = - \beta\left( 1-e^{-\omega a}\right)\, r(t,a) i(t) \nonumber \\
 & \hspace{2.8cm} - \mu r(t,a),\quad a>0,
\end{align}
with the boundary condition $r(t,0) = \gamma i(t).$

It has been shown in \citep{thieme2002endemic} that each of \eqref{SIRkSPDEODELin} and \eqref{SIRkSPDEODEexp} has only a disease free equilibrium (which is globally asymptotically stable) when the basic reproduction number $R_0\leq 1$, and a unique locally stable endemic equilibrium when $R_0>1$. We refer to the seminal works \citep{kermack1932contributions,kermack1933contributions} and the revisiting paper \citep{inaba2001kermack} for a general formulation in case both virgin and recovered individuals have varying susceptibility and infectives have variable infectivity. See also \citep{forien2022stochastic} where the authors consider the effects of previous infections on the susceptibility of partially susceptible individuals.

\bmhead{Acknowledgments}
The authors are grateful to the Swedish Research Council (grant 2020-04744) for financial support.

\noindent

\bibliography{main}% common bib file

\newpage
\appendix
\section{Appendix}
\subsection{Case $k=2$: immunity waning in two steps}\label{appendix1}

We have the following results when immunity is lost in two steps.
\begin{lem}\label{lem1}
The SIR\textsuperscript{(2)}S model (3) in the main text has a unique endemic equilibrium $E^{2,*}=\left( \bar{s}, \bar{i}, \bar{r}_0,\bar{r}_{1}\right)$ if and only if $R_0>1$.
\end{lem}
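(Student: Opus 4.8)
The plan is to collapse the four equilibrium equations of the SIR\textsuperscript{(2)}S model into a single scalar equation in the endemic prevalence $\bar i$, and to show that this equation admits exactly one positive root precisely when $R_0>1$.

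First I would set the four right-hand sides to zero and record the conservation law: summing the equations gives $(s+i+r_0+r_1)' = \mu\bigl(1-(s+i+r_0+r_1)\bigr)$, so every equilibrium satisfies $\bar s + \bar i + \bar r_0 + \bar r_1 = 1$. At an endemic equilibrium $\bar i>0$, dividing the $i$-equation by $\bar i$ yields the effective-susceptibility identity $\bar s + \tfrac12\bar r_1 = (\gamma+\mu)/\beta = 1/R_0$. From the $r_0$- and $r_1$-equations I would then solve, in turn, $\bar r_0 = \gamma\bar i/(c_2(1)+\mu)$ and $\bar r_1 = c_2(1)\bar r_0/\bigl(\tfrac{\beta}{2}\bar i + c_2(2)+\mu\bigr)$, so that $\bar r_0$ and $\bar r_1$ are explicit, manifestly increasing functions of $\bar i$ alone, and $\bar s = 1/R_0 - \tfrac12\bar r_1$.

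Substituting these into the conservation identity reduces the whole system to $F(\bar i) = 1 - 1/R_0$, where $F(x) = Ax + Bx/\bigl(\tfrac{\beta}{2}x + C\bigr)$ with $A = 1 + \gamma/(c_2(1)+\mu)$, $B = c_2(1)\gamma/\bigl(2(c_2(1)+\mu)\bigr)$ and $C = c_2(2)+\mu$, all strictly positive. Since $F'(x) = A + BC/\bigl(\tfrac{\beta}{2}x+C\bigr)^2 > 0$, $F(0)=0$ and $F(x)\to\infty$, the map $F$ is a continuous strictly increasing bijection of $[0,\infty)$ onto $[0,\infty)$. Hence $F(\bar i) = 1-1/R_0$ has a (necessarily unique) positive solution if and only if $1-1/R_0>0$, i.e.\ $R_0>1$; when $R_0\le 1$ the only root is $\bar i=0$, the disease-free equilibrium. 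This yields both directions of the equivalence at once.

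The remaining point, which I expect to be the only genuinely delicate step, is to confirm that the unique candidate is feasible, i.e.\ has all coordinates nonnegative. Positivity of $\bar r_0,\bar r_1$ is immediate, and $\bar i < 1-1/R_0 < 1$ follows from the reduced equation together with $\bar r_0\ge0$. The nontrivial inequality is $\bar s = 1/R_0 - \tfrac12\bar r_1 \ge 0$: here I would use that $\bar r_1$ is increasing in $\bar i$ and bounded above by its limit $2c_2(1)\gamma/\bigl(\beta(c_2(1)+\mu)\bigr)$, whence $\tfrac12\bar r_1 < \gamma/\beta < (\gamma+\mu)/\beta = 1/R_0$ for every $\bar i>0$, so $\bar s>0$ automatically. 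Combining the bijection argument with this feasibility check establishes existence and uniqueness of the endemic equilibrium exactly when $R_0>1$.
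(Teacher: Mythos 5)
Your proof is correct, and it takes a genuinely different route from the paper's. The paper (in the proof of Lemma A.2, of which Lemma A.1 is the zero-vaccination special case) eliminates variables to obtain a quadratic $a x^2+bx+\tilde c=0$ in $\bar i$ and argues via the signs of the coefficients: $a>0$ and $\tilde c \propto (1-R_e^{(2)})$, so $a\tilde c<0$ gives exactly one positive root when $R_0>1$, while for $R_0\le 1$ it asserts that all coefficients are nonnegative so no positive root exists. You instead keep the system in the form $F(\bar i)=1-1/R_0$ with $F(x)=Ax+Bx/(\tfrac{\beta}{2}x+C)$ strictly increasing from $0$ to $\infty$, which delivers existence, uniqueness, and the threshold at $R_0=1$ in one stroke without computing or sign-checking any coefficients. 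Your route has two concrete advantages: it makes the ``only if'' direction transparent (the paper's corresponding claim is left as ``it can be shown''), and it forces you to verify feasibility of the candidate equilibrium --- your bound $\tfrac12\bar r_1<\gamma/\beta<1/R_0$ giving $\bar s>0$ is a check the paper omits entirely. What the paper's quadratic buys in exchange is an explicit closed-form $\bar i=(-b+\sqrt{b^2-4a\tilde c})/(2a)$, which the authors need anyway for their numerical plots, and a formulation that carries over verbatim to the vaccinated model. One caveat if you hoped your argument would crack the paper's conjecture for general $k$: the aggregate $\bar r_0+\sum_{j\ge 1}(1-j/k)\bar r_j$ is no longer obviously monotone in $\bar i$ once $k>2$, since each $\bar r_j$ is a ratio of quantities that both increase with $\bar i$, so the monotone-bijection step does not extend automatically.
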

Proof of Lemma \ref{lem1} is similar to the proof of the following lemma when the vaccination rates are equal to zero.
\begin{lem}\label{lem2}
The SIR\textsuperscript{(2)}S model (7) in the main text with vaccination has a unique endemic equilibrium $E_v^{2,*} =\left( \bar{s}^{\boldsymbol{\eta}}, \bar{i}^{\boldsymbol{\eta}}, \bar{r}_0^{\boldsymbol{\eta}},\bar{r}_{1}^{\boldsymbol{\eta}}\right)$ if and only if $R_e^{(2)}>1$.
\end{lem}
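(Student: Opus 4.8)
The plan is to specialise the steady-state equations of system \eqref{SIRkS vaccination} to $k=2$ and reduce the existence of an endemic equilibrium to a single scalar equation in the infected fraction, which I will show has a unique admissible root exactly when $R_e^{(2)}>1$. Abbreviating $c_1=c_2(1)$, $c_2=c_2(2)$, $\eta=\eta_1$ and $\bar i=\bar i^{\boldsymbol{\eta}}$, an endemic equilibrium is a stationary solution with $\bar i>0$. First I would add the four stationary equations, which collapse to $\mu\bigl(1-(\bar s^{\boldsymbol{\eta}}+\bar i+\bar r_0^{\boldsymbol{\eta}}+\bar r_1^{\boldsymbol{\eta}})\bigr)=0$; hence every equilibrium lies on the simplex $\bar s^{\boldsymbol{\eta}}+\bar i+\bar r_0^{\boldsymbol{\eta}}+\bar r_1^{\boldsymbol{\eta}}=1$, a fact I will reuse for feasibility. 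Dividing the $i$-equation by $\bar i>0$ yields the key relation $R_0\bigl(\bar s^{\boldsymbol{\eta}}+\tfrac12\bar r_1^{\boldsymbol{\eta}}\bigr)=1$, which is the only place the value of $\bar i$ is constrained nonlinearly.

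Next I would treat the three remaining equations $s^{\boldsymbol{\eta}}{}'=r_0^{\boldsymbol{\eta}}{}'=r_1^{\boldsymbol{\eta}}{}'=0$ as a linear system in $(\bar s^{\boldsymbol{\eta}},\bar r_0^{\boldsymbol{\eta}},\bar r_1^{\boldsymbol{\eta}})$ in which $\bar i$ enters only as a parameter through the terms $\beta\bar i$ and $\tfrac{\beta}{2}\bar i$. Eliminating $\bar r_0^{\boldsymbol{\eta}}$ and $\bar s^{\boldsymbol{\eta}}$ gives each coordinate as an explicit rational function $\bar s(\bar i),\bar r_0(\bar i),\bar r_1(\bar i)$ of $\bar i\ge0$; a sign check shows the determinant of the reduced $3\times3$ system never vanishes (its modulus stays $\ge\mu>0$), so these functions are well defined and strictly positive for all $\bar i\ge0$. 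Substituting them into the key relation reduces the problem to solving $\Phi(\bar i)=1$ with $\Phi(\bar i):=R_0\bigl(\bar s(\bar i)+\tfrac12\bar r_1(\bar i)\bigr)$. Since the reduced system at $\bar i=0$ is exactly the disease-free equilibrium, $\Phi(0)=R_0\bigl(\hat s^{\boldsymbol{\eta}}+\tfrac12\hat r_1^{\boldsymbol{\eta}}\bigr)=R_e^{(2)}$, and a leading-order computation gives $\lim_{\bar i\to\infty}\Phi(\bar i)=\frac{c_1\gamma}{(\gamma+\mu)(c_1+\mu)}<1$, independently of the vaccination rates.

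The crux is the monotonicity claim that $\Phi$ is strictly decreasing on $[0,\infty)$. Granting it, $\Phi(\bar i)=1$ has a (necessarily unique) solution $\bar i^{\ast}>0$ iff $\Phi(0)=R_e^{(2)}>1$, while for $R_e^{(2)}\le1$ one has $\Phi(\bar i)<1$ for all $\bar i>0$ and no endemic equilibrium exists. I would establish monotonicity either by differentiating $\Phi$ and showing the numerator of $\Phi'$ is negative for all $\bar i\ge0$, or equivalently by clearing the strictly positive denominator $2P(\bar i)^2\lvert D(\bar i)\rvert$ so that $\mathrm{sign}\,(\Phi(\bar i)-1)=\mathrm{sign}\,f(\bar i)$ for a polynomial $f$, and then analysing the sign of $f$. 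I expect this step to be the main obstacle: the coordinates $\bar s(\bar i)$ and $\bar r_1(\bar i)$ need not be separately monotone when $\beta$ is large relative to $\gamma$, so the decrease of $\Phi$ only emerges after combining the two terms, and it relies essentially on $\mu>0$ — indeed when $\mu=0$ the reduced system carries a whole line of equilibria and the statement degenerates. Finally I would close the argument: when $R_e^{(2)}>1$ the conservation identity forces $\bar s^{\boldsymbol{\eta}}+\bar r_0^{\boldsymbol{\eta}}+\bar r_1^{\boldsymbol{\eta}}=1-\bar i^{\ast}$ with all three coordinates strictly positive and $\bar i^{\ast}\in(0,1)$, so the root yields a genuine equilibrium $E_v^{2,*}$ in the admissible simplex, unique because the root of $\Phi$ is; conversely no such equilibrium exists when $R_e^{(2)}\le1$. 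Lemma \ref{lem1} is then recovered by setting $\eta_s=\eta=0$, for which $R_e^{(2)}=R_0$.
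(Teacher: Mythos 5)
Your reduction is sound as far as it goes: summing the stationary equations to get the simplex identity, dividing the $i$-equation by $\bar i>0$ to obtain $R_0\bigl(\bar s^{\boldsymbol{\eta}}+\tfrac12\bar r_1^{\boldsymbol{\eta}}\bigr)=1$, solving the remaining linear system for $\bigl(\bar s(\bar i),\bar r_0(\bar i),\bar r_1(\bar i)\bigr)$, and noting $\Phi(0)=R_e^{(2)}$ with $\lim_{\bar i\to\infty}\Phi(\bar i)=\frac{c_1\gamma}{(\gamma+\mu)(c_1+\mu)}<1$ are all correct. But the argument has a genuine gap exactly where you flag it: the strict monotonicity of $\Phi$ on $[0,\infty)$ is the entire content of both the uniqueness claim and the nonexistence claim for $R_e^{(2)}\le 1$, and you do not prove it — you only describe two ways one might try. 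It is not a routine verification: as you yourself observe, $\bar s(\bar i)$ is decreasing while $\bar r_1(\bar i)$ is generically increasing (it rises from $\hat r_1^{\boldsymbol{\eta}}$, which is $0$ when $\eta_s=0$, to the strictly positive limit $\tfrac{2c_1\gamma}{\beta(c_1+\mu)}$), so $\Phi$ is a sum of terms pulling in opposite directions and its global monotonicity requires an actual computation. Without that step the "if" direction only gives existence (via $\Phi(0)>1>\Phi(\infty)$ and continuity), not uniqueness, and the "only if" direction is not established at all, since a non-monotone $\Phi$ with $\Phi(0)\le 1$ could still cross $1$.

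The paper avoids this global statement about a rational function by clearing denominators: the endemic condition becomes an explicit quadratic $a\bar i^{\,2}+b\bar i+\tilde c=0$ with $a>0$ and $\tilde c$ proportional to $1-R_e^{(2)}$. Then $R_e^{(2)}>1$ gives $a\tilde c<0$, hence exactly one positive root regardless of the sign of $b$; and for $R_e^{(2)}\le 1$ one checks that $a,b,\tilde c$ are all nonnegative, so no positive root exists. That is a finite, checkable sign computation on three coefficients rather than a monotonicity proof, and it is strictly weaker than what you are asking of $\Phi$ (the quadratic having a unique positive root does not require $\Phi$ to be monotone). To repair your write-up with the least effort, replace the monotonicity step by exactly this: write $\Phi(\bar i)-1=-f(\bar i)/D(\bar i)$ with $D>0$, verify $f$ is quadratic with positive leading coefficient and constant term proportional to $1-R_e^{(2)}$, and analyse the signs of its three coefficients. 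The surrounding structure of your argument (feasibility via the simplex identity, recovery of Lemma \ref{lem1} by setting the vaccination rates to zero) can then stand as written.
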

\begin{proof}
Solving the endemic equilibrium of equation (7) in the main paper for $k=2$ implies that the endemic level $\bar{i}^{\boldsymbol{\eta}}$ is the positive root of the following quadratic polynomial equation
\begin{align}\label{tosolve}
a x^2+ b x + \tilde{c} =0,
\end{align}
where
\begin{align*}
a =& \beta\frac{\mu}{c_2(1)} \left( c_2(1)+\gamma+\mu\right),\\
b =& \mu\frac{c_2(1)+\mu+\gamma}{c_2(1)} \left( \mu+\eta_s+2c_2(2) \right) - \beta \frac{c_2(1)+\mu}{c_2(1)} \left( \mu+2\dfrac{c_2(2)}{R_0}\right) \\
&+ \frac{\beta}{R_0} \left(2\dfrac{c_2(1)+\mu}{c_2(1)}(\mu+c_2(2)+\eta_1)-2\eta_1 \right)
,\\
%&- \beta \frac{c_2(1)+\mu}{c_2(1)} \left( \mu+2\dfrac{c_2(2)}{R_0}\right),\\
\tilde{c} =& \left( (c_2(1)+\mu)(c_2(2)+\mu)+\eta_1\mu+\eta_s (c_2(1)+c_2(2)+\mu+\eta_1)\right) \\
&\times \frac{2\mu}{c_2(1) R_0}  \left(1-R_e^{(2)}\right).
\end{align*}
If $R_e^{(2)}>1$, we have $a\tilde{c}<0$ and then the equation \eqref{tosolve} has a unique positive root given by $\displaystyle\bar{i}^{\boldsymbol{\eta}}= (-b+\sqrt{b^2-4a\tilde{c}}) /(2a)$.
Furthermore, we have
\begin{align}
&\bar{s}^{\boldsymbol{\eta}} = \frac{\mu + c_2(2)/R_0}{ \left( \beta \bar{i}^{\boldsymbol{\eta}}+\mu+\eta_s\right)/2 +c_2(2)} ,\\
&\bar{r}_1^{\boldsymbol{\eta}} = 2 \left(  \frac{1}{R_0} - \bar{s}^{\boldsymbol{\eta}} \right),\\
&\bar{r}_0^{\boldsymbol{\eta}} = 1 - \bar{i}^{\boldsymbol{\eta}} - \bar{s}^{\boldsymbol{\eta}} - \bar{r}_1^{\boldsymbol{\eta}}.
\end{align}
Otherwise, it can be shown that all the coefficients $a,b$ and $\tilde{c}$ are non-negative and then the equation \eqref{tosolve} has no positive roots, that is, only the disease-free equilibrium exists when $R_e^{(2)}\leq 1$. 
\end{proof}
The following result gives the critical vaccine supply from the SI$\mbox{R}^{(2)}$S model with vaccination. 
\begin{prop}\label{proposition}~
\begin{enumerate}
\item If $\displaystyle 1<R_0<2(c_2(1)+c_2(2)+\mu)/c_2(1) , $ then, the critical vaccine supply is given by $$\theta_c^{(2)}=\dfrac{(c_2(1)+\mu)(c_2(2)+\mu)}{{c_2(1)+c_2(2)/2+\mu} } \left( 1-\dfrac{1}{R_0}\right). $$ %with the optimal strategy $\eta_s^\star>0$ and $\eta_1^\star=0.$
\item If $\displaystyle R_0\geq 2(c_2(1)+c_2(2)+\mu)/c_2(1), $ then, the critical vaccine supply is given by $$\theta_c^{(2)}=\mu + \left(c_2(1) \dfrac{R_0}{2}-(c_2(1)+c_2(2)+\mu) + c_2(2)\right)\dfrac{2}{R_0}.$$%with the optimal strategy $\eta_s^\star=\infty$ and $\eta_1^\star>c_2(2).$
\end{enumerate}
\end{prop}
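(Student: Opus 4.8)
The plan is to characterise the critical vaccine supply as the smallest steady delivery rate $\theta$ for which \emph{some} admissible vaccination scheme drives the effective reproduction number $R_e^{(2)}=R_0\bigl(\hat{s}^{\boldsymbol{\eta}}+\tfrac12\hat{r}_1^{\boldsymbol{\eta}}\bigr)$ down to $1$. Invoking the greedy principle stated above (for a fixed budget it is optimal to vaccinate the most susceptible individuals first, and for $k=2$ the susceptibility ordering is $s>r_1>r_0$), the optimal scheme falls into exactly two regimes: either (i) one vaccinates only the fully susceptibles at some finite rate $\eta_s$, keeping $\eta_1=0$; or (ii) one vaccinates the fully susceptibles so fast that $\hat{s}^{\boldsymbol{\eta}}=0$ and in addition vaccinates class $r_1$. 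The whole argument then reduces to computing the disease-free equilibrium of \eqref{SIRkS vaccination} at $k=2$ in each regime, imposing $R_e^{(2)}=1$, and reading off $\theta_c^{(2)}$ from \eqref{criticalv}.

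For regime (i) I would first solve the $i=0$ steady state of \eqref{SIRkS vaccination} with $\eta_1=0$. The $r_1$- and $r_0$-equations give $\hat{r}_1^{\boldsymbol{\eta}}=\tfrac{c_2(1)}{(c_2(1)+\mu)(c_2(2)+\mu)}\,\eta_s\hat{s}^{\boldsymbol{\eta}}$, and substituting into the $s$-equation yields the convenient identity $\eta_s\hat{s}^{\boldsymbol{\eta}}=\mu(1-\hat{s}^{\boldsymbol{\eta}})/(1-\alpha)$ with $\alpha=\tfrac{c_2(1)c_2(2)}{(c_2(1)+\mu)(c_2(2)+\mu)}$, so that the supply is simply $\theta_c^{(2)}=\eta_s^\star\hat{s}^{\boldsymbol{\eta}}=\mu(1-\hat{s}^{\boldsymbol{\eta}})/(1-\alpha)$. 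Because $\tfrac12\hat{r}_1^{\boldsymbol{\eta}}$ becomes proportional to $1-\hat{s}^{\boldsymbol{\eta}}$, the condition $R_e^{(2)}=1$ is \emph{linear} in $\hat{s}^{\boldsymbol{\eta}}$ and solves in closed form; inserting the resulting $\hat{s}^{\boldsymbol{\eta}}$ into the supply identity gives the first displayed expression for $\theta_c^{(2)}$. This regime is valid precisely while the required $\eta_s^\star$ is finite and nonnegative, i.e.\ while $\hat{s}^{\boldsymbol{\eta}}>0$.

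For regime (ii) I would set $\hat{s}^{\boldsymbol{\eta}}=0$ and note that individuals leaving $r_1$ toward full susceptibility (rate $c_2(2)$) are re-vaccinated instantaneously, so the effective depletion rate of $r_1$ is $\eta_1^\star+c_2(2)$ and the system collapses to the two compartments $r_0,r_1$ fed by newborns at rate $\mu$. Its balance equations give $\hat{r}_1^{\boldsymbol{\eta}}=c_2(1)/(c_2(1)+c_2(2)+\mu+\eta_1^\star)$, and imposing $R_e^{(2)}=\tfrac{R_0}{2}\hat{r}_1^{\boldsymbol{\eta}}=1$ fixes $\eta_1^\star=\tfrac{R_0}{2}c_2(1)-(c_2(1)+c_2(2)+\mu)$. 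Feeding this together with $\hat{r}_1^{\boldsymbol{\eta}}=2/R_0$ into the second branch of \eqref{criticalv}, $\theta_c^{(2)}=\mu+(\eta_1^\star+c_2(2))\hat{r}_1^{\boldsymbol{\eta}}$, produces the stated second formula. The threshold separating the two regimes is exactly the value of $R_0$ at which $\eta_1^\star=0$ (equivalently, where regime (i) saturates, $\hat{s}^{\boldsymbol{\eta}}\to0$ as $\eta_s\to\infty$), namely $R_0=2(c_2(1)+c_2(2)+\mu)/c_2(1)$.

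The step I expect to be the real obstacle is not the algebra but the justification that these two one-parameter families exhaust the optimal strategies: one must argue that, at fixed supply $\theta$, redirecting vaccine from a more-immune to a more-susceptible class can only lower $R_e^{(2)}$, so the minimiser never vaccinates $r_0$ and turns to $r_1$ only once the susceptibles are fully covered. I would make this rigorous by a monotonicity/exchange argument on the disease-free equilibrium. A useful internal check on the final formulas is continuity at the threshold: both branches must return $\theta_c^{(2)}=(c_2(1)+\mu)(c_2(2)+\mu)/(c_2(1)+c_2(2)+\mu)$ there, which pins down the common value and guards against an inadvertent swap of $c_2(1)$ and $c_2(2)$ in the simplification of the first branch.
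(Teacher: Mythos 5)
Your plan follows the paper's proof in its essentials: for $k=2$ the strategy space is exactly the pairs $(\eta_s,\eta_1)\in[0,\infty]^2$, the two candidate optima are $(\eta_s^\star,0)$ and $(\infty,\eta_1^\star)$, the threshold $R_0=2(c_2(1)+c_2(2)+\mu)/c_2(1)$ is located the same way (where $\eta_s^\star$ blows up, equivalently where $\eta_1^\star$ becomes nonnegative), and your regime (ii) computation reproduces the paper's second branch verbatim. The one genuine gap is the step you yourself single out: the claim that the minimiser of the supply on the constraint set $\{R^{\boldsymbol{\eta}}=1\}$ is never an interior point with $0<\eta_s<\infty$ and $\eta_1>0$ rests on a ``greedy principle'' and a promised monotonicity/exchange argument that is not actually supplied. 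The paper does not argue structurally here; it verifies optimality by direct computation. Using $\hat{s}^{\boldsymbol{\eta}}=1/R_0-\hat{r}_1^{\boldsymbol{\eta}}/2$ it writes $\theta^{(2)}(\eta_s,\eta_1)=\eta_s/R_0+(\eta_1-\eta_s/2)\,\hat{r}_1^{\boldsymbol{\eta}}$, solves $R^{\boldsymbol{\eta}}=1$ for $\eta_s$ as an explicit function of $\eta_1$, and exhibits $\theta^{(2)}(\eta_s,\eta_1)-\theta^{(2)}(\eta_s^\star,0)$ (resp.\ $\theta^{(2)}(\eta_s,\eta_1)-\theta^{(2)}(\infty,\eta_1^\star)$) as a manifestly nonnegative rational function of the remaining free rate. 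Any rigorous version of your exchange argument must produce exactly such an inequality, so you should either import this computation or prove monotonicity of $\theta^{(2)}$ along the constraint curve; as written, the optimality of your two regimes is the unproved heart of the statement.

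Do carry out the continuity check you propose, because it catches something real. Solving your regime (i) steady state gives $\hat{r}_0^{\boldsymbol{\eta}}=(c_2(2)+\mu)\,\theta/D$ and $\hat{r}_1^{\boldsymbol{\eta}}=c_2(1)\,\theta/D$ with $D=(c_2(1)+\mu)(c_2(2)+\mu)$ and $\theta=\eta_s\hat{s}^{\boldsymbol{\eta}}$; the condition $R_e^{(2)}=1$ then reads $1-1/R_0=\hat{r}_0^{\boldsymbol{\eta}}+\hat{r}_1^{\boldsymbol{\eta}}/2=\theta\,(c_2(1)/2+c_2(2)+\mu)/D$, so regime (i) yields $\theta_c^{(2)}=D(1-1/R_0)/(c_2(1)/2+c_2(2)+\mu)$: the factor $1/2$ lands on $c_2(1)$, not on $c_2(2)$ as in the displayed first branch. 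Only this version is continuous with the second branch at the threshold, where both equal $D/(c_2(1)+c_2(2)+\mu)$ (and it also survives the sanity check $c_2(1)\to\infty$). The denominator $c_2(1)+c_2(2)/2+\mu$ in the statement therefore appears to be an index swap; it is harmless for the linear calibration, where $c_2(1)=c_2(2)$, but not for the exponential one.
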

\begin{proof}
Let $(\eta_s,\eta_1)\in [0,\infty]^2$ such that  $R^{\boldsymbol{\eta}}=1$. Then, we have
\begin{align}
\theta^{(2)}\left( \eta_s,\eta_1\right)&= \eta_s \hat{s}^{\boldsymbol{\eta}}(\eta_s,\eta_1) + \eta_1 \hat{r}_1^{\boldsymbol{\eta}}(\eta_s,\eta_1)\\
&=\frac{\eta_s}{R_0}+ \left( \eta_1-\frac{\eta_s}{2} \right) \hat{r}_1^{\boldsymbol{\eta}}(\eta_s,\eta_1).
\end{align}
\begin{enumerate}
\item Assume that $\displaystyle 1<R_0<2(c_2(1)+c_2(2)+\mu)/c_2(1)  $ and set $$ \eta_s^\star:= \dfrac{(c_2(1)+\mu)(c_2(2)+\mu)}{(c_2(1)+c_2(2)+\mu) - c_2(1) R_0/2 } (R_0-1). $$
Then, we have
\begin{align}
&\theta^{(2)}\left( \eta_s,\eta_1\right)-\theta_c^{(2)}\left( \eta_s^\star, 0\right)\\
&=\frac{\eta_s}{R_0}+ \left( \eta_1-\frac{\eta_s}{2}\right) \hat{r}_1^{\boldsymbol{\eta}}\left( \eta_s,\eta_1\right)-\frac{\eta_s^\star}{R_0}+\frac{\eta_s^\star }{2} \hat{r}_1^{\boldsymbol{\eta}}\left(\eta_s^\star,0\right).
\end{align}
Rearranging the equality $R^{\boldsymbol{\eta}}=1$ allows to write $\eta_s$ in terms of $\eta_1$ as
\begin{align}
\eta_s = \dfrac{(c_2(1)+\mu)(c_2(2)+\mu)+\eta_1\mu}{c_2(1)+c_2(2)+\mu+\eta_1 - c_2(1) R_0/2}(R_0-1).
\end{align}
Therefore, a direct computation leads to
\begin{align}
&\theta^{(2)}\left( \eta_s,\eta_1\right)-\theta\left( \eta_s^\star, 0\right)\\
&= \dfrac{\eta_1c_2(1)(c_2(1)+\mu)(1-1/R_0)/2}{ \left( c_2(1)+c_2(2)+\mu - c_2(1)/2\right)\left( c_2(1)+c_2(2)+\mu+\eta_1 - c_2(1)/2\right)  }\nonumber\\
&\geq 0.
\end{align}
That is, the critical vaccine supply is given by 
\begin{align}
\theta_c^{(2)}=\theta^{(2)}\left( \eta_s^\star, 0\right)=\dfrac{(c_2(1)+\mu)(c_2(2)+\mu)}{c_2(1)+c_2(2)/2+\mu} \left( 1-\dfrac{1}{R_0}\right).
\end{align}
\item Now, assume that $\displaystyle R_0\geq 2(c_2(1)+c_2(2)+\mu)/c_2(1)  $ and set
$$ \eta_1^\star= c_2(1) R_0/2-(c_2(1)+c_2(2)+\mu)>0.$$
Similarly, one can show that
\begin{align}
\theta^{(2)}\left( \eta_s,\eta_1\right)-\theta^{(2)}\left( \infty,\eta_1^\star\right)= \dfrac{(c_2(1)+\mu) (c_2(2)+\mu  R_0/2 )}{ R_0(c_2(2)+(\mu+\eta_s)/2) } \geq 0.
\end{align}
and then the critical vaccine supply is given by 
\begin{align}
\theta_c^{(2)}=&\theta^{(2)}\left( \infty,\eta_1^\star\right) \\
=& \mu + (\eta_1^\star + c_2(2))\dfrac{2}{R_0}
\end{align}
where $\frac{2}{R_0}$ is the fraction of individuals in the class $r_j$ under the optimal vaccination strategy $( \infty,\eta_1^\star)$.
\end{enumerate}
\end{proof}
Fig. \ref{fig:sensit} plots the endemic level and the critical vaccine supply for the the SIR\textsuperscript{(2)}S models with linear and exponential decay of immunity, where it can be seen that the classic SIRS model underestimates both of the endemic level and the the critical vaccine supply. For a relatively small $k>1$, it should be clear that the long-term prevalence and the critical vaccine supply are affected by the choice of the transition rates between the immune states. When $k$ is large enough to fit the the linear and the exponential decays of immunity, both the corresponding long-term prevalence and the critical vaccine supply converge (Figs. 4 and 5 in the main text).

\begin{figure}[h!]
\centering
\begin{subfigure}{.7\linewidth}
\includegraphics[width=\linewidth]{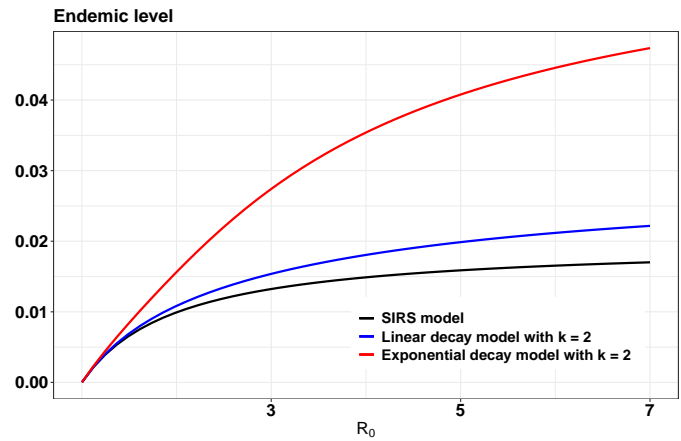}
\caption{}
\label{fig:endlevkequals2}
\end{subfigure}
\begin{subfigure}{.7\linewidth}
\includegraphics[width=\linewidth]{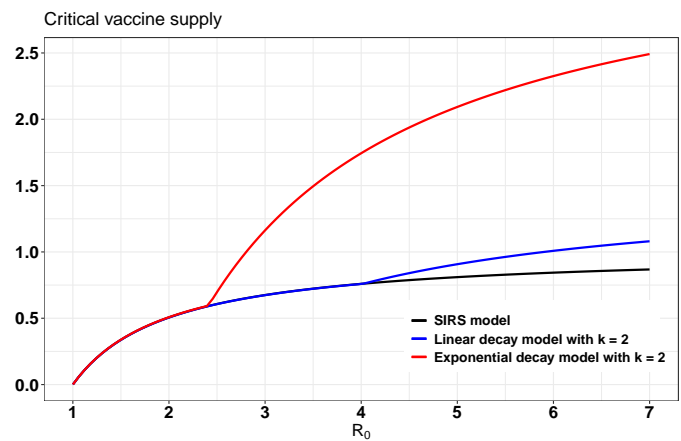}
\caption{}
\label{fig:VaccineSupplykequals2}
\end{subfigure}
\caption{(a) Endemic level and (b) critical vaccine supply from the classic SIRS model and the SIR\textsuperscript{(2)}S models with linear and exponential waning. All other parameters are set to their baseline values in Table 1 in the main text.}
\label{fig:sensit}
\end{figure}
\subsection{Sensitivity of the endemic level and the critical vaccine supply}
Figs. \ref{fig:EndLevelSens}--\ref{fig:vaxSens} plot the endemic level and the critical vaccine supply, as functions of the basic reproduction number $R_0 $, for the limiting SIR\textsuperscript{(k)}S epidemic models with linear and exponential waning of immunity. It can be seen that both the infectious period $\gamma^{-1} $ and the average immune period $\omega^{-1}$ strongly affect the long term prevalence for both modes of waning of immunity. We did not vary $\gamma^{-1} $ in Fig. \ref{fig:vaxSens} as the critical vaccine supply is only dependent on  $\gamma^{-1} $ through $R_0 $. It is also clear that varying the immunity duration from 6 months to 2 years can result in large reduction of the amount of vaccine needed to reach herd immunity.
\begin{figure}[htbp]
\centering
  \begin{subfigure}{.7\linewidth}
    \centering
    \includegraphics[width=\linewidth]{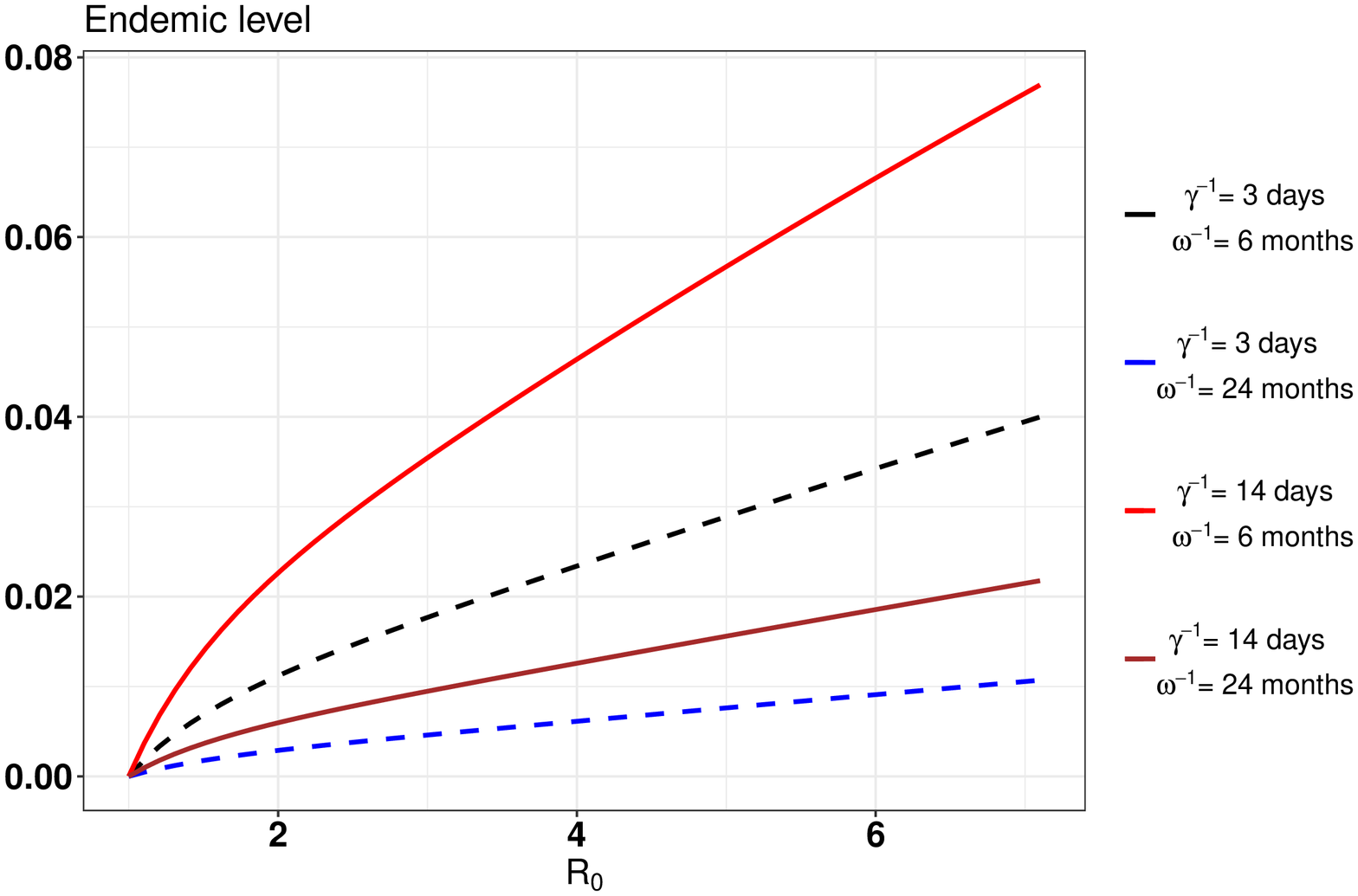}
    \caption{}
  \label{fig:EndLevelSensLin}
  \end{subfigure}
  \begin{subfigure}{.7\linewidth}
    \centering
    \includegraphics[width=\linewidth]{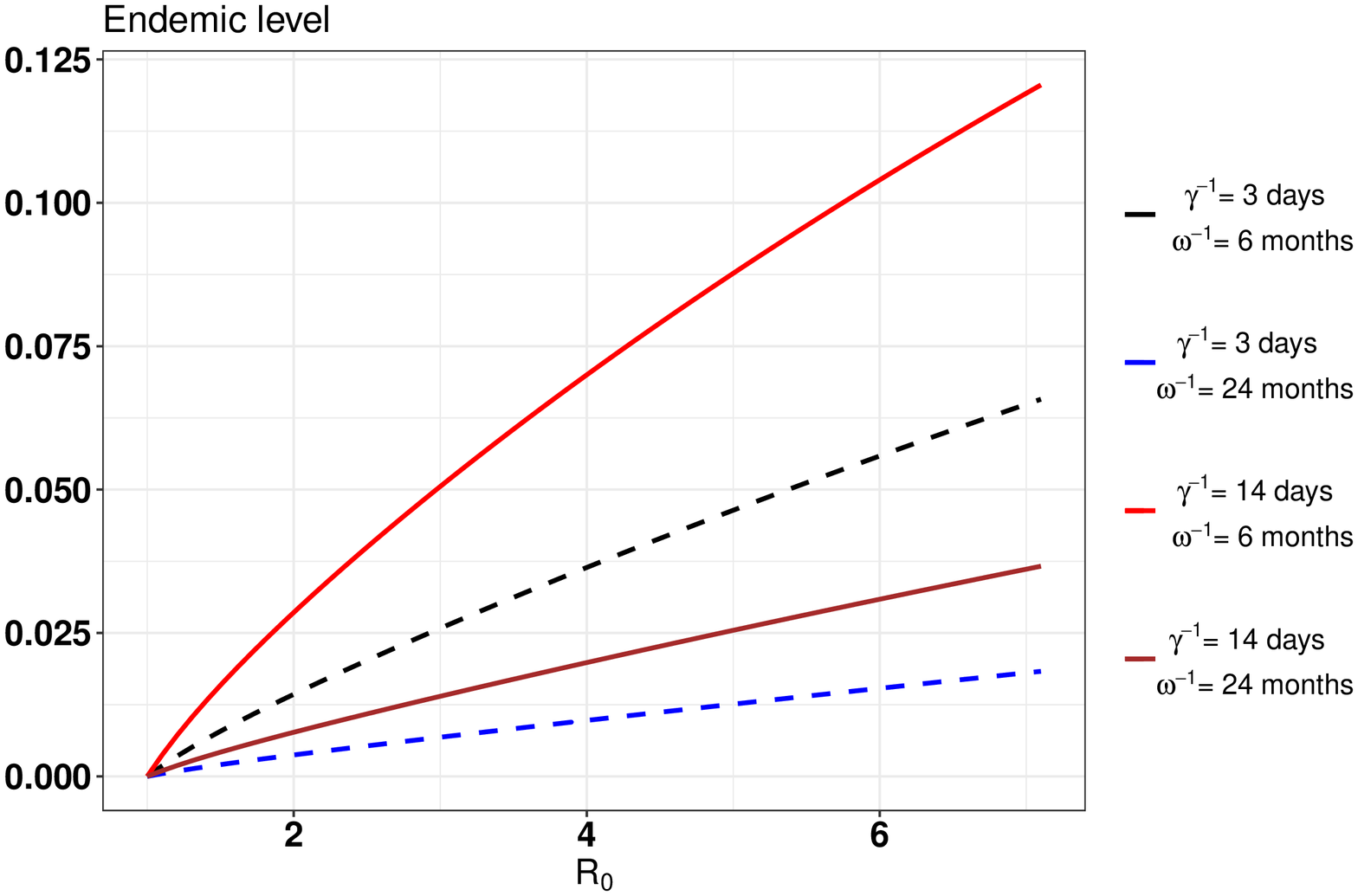}
    \caption{}
    \label{fig:EndLevelSensExp}
  \end{subfigure}
  \caption{Endemic levels from the limiting SIR\textsuperscript{(k)}S epidemic models, varying the mean infectious period and the average immunity duration from their baseline values. Solid lines correspond to the case where $\omega^{-1}=6$ months and dashed lines correspond to the case where $\omega^{-1}=24$ months. (a) Linear waning of immunity. (b) Exponential waning of immunity.}
  \label{fig:EndLevelSens}
\end{figure}
\begin{figure}[htbp]
\centering
  \begin{subfigure}{.7\linewidth}
    \centering
    \includegraphics[width=\linewidth]{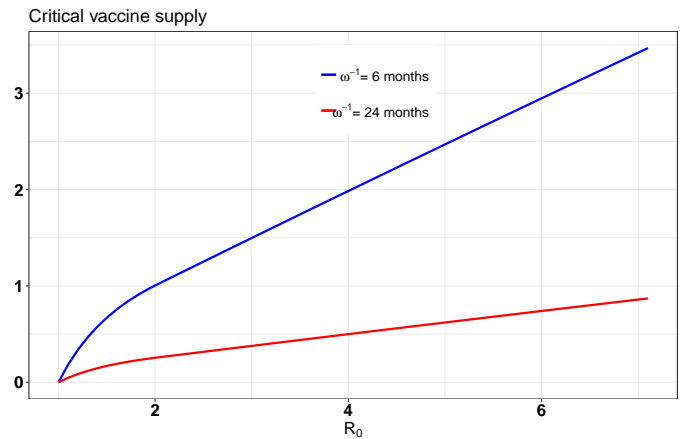}
    \caption{}
  \label{fig:vaxSensLin}
  \end{subfigure}
  \begin{subfigure}{.7\linewidth}
    \centering
    \includegraphics[width=\linewidth]{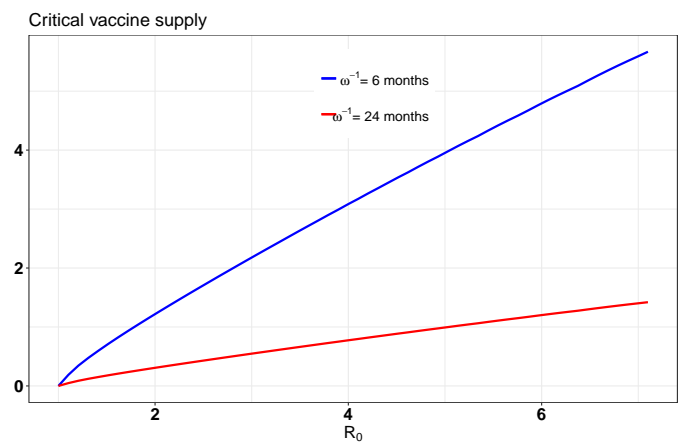}
    \caption{}
    \label{fig:vaxSensExp}
  \end{subfigure}
  \caption{Critical vaccine supply from the limiting SIR\textsuperscript{(k)}S epidemic model, varying the average immunity duration from its baseline value. (a) Linear waning of immunity. (b) Exponential waning of immunity.}
  \label{fig:vaxSens}
\end{figure}

\end{document}